\renewcommand{\vec}[1]{\boldsymbol{#1}}
\newtheorem{thm}{Theorem}
\newtheorem{lem}{Lemma}
\newtheorem{defn}{Definition}
\newtheorem{exmp}{Example}
\newtheorem{rem}{Remark}
\newtheorem{pro}{Problem}
\begin{document}
%
\title{A k-hop Collaborate Game Model: Extended to Community Budgets and Adaptive Non-Submodularity}
%
%
%
%

\author{Jianxiong Guo,
	Weili Wu~\IEEEmembership{Member,~IEEE}
	\IEEEcompsocitemizethanks{\IEEEcompsocthanksitem J. Guo, and W. Wu are with the Department of Computer Science, Erik Jonsson School of Engineering and Computer Science, Univerity of Texas at Dallas, Richardson, TX, 75080 USA\protect\\
		E-mail: jianxiong.guo@utdallas.edu
	}
	\thanks{Manuscript received April 19, 2005; revised August 26, 2015.}}

%
%

\markboth{Journal of \LaTeX\ Class Files,~Vol.~14, No.~8, August~2015}%
{Shell \MakeLowercase{\textit{et al.}}: Bare Demo of IEEEtran.cls for Computer Society Journals}
%



\IEEEtitleabstractindextext{%
\begin{abstract}
Revenue maximization (RM) is one of the most important problems on online social networks (OSNs), which attempts to find a small subset of users in OSNs that makes the expected revenue maximized. It has been researched intensively before. However, most of exsiting literatures were based on non-adaptive seeding strategy and on simple information diffusion model, such as IC/LT-model. It considered the single influenced user as a measurement unit to quantify the revenue. Until Collaborate Game model \cite{guo2019khop} appeared, it considered activity as a basic object to compute the revenue. An activity initiated by a user can only influence those users whose distance are within k-hop from the initiator. Based on that, we adopt adaptive seed strategy and formulate the Revenue Maximization under the Size Budget (RMSB) problem. If taking into account the product's promotion, we extend RMSB to the Revenue Maximization under the Community Budget (RMCB) problem, where the influence can be distributed over the whole network. The objective function of RMSB and RMCB is adatpive monotone and not adaptive submodular, but in some special cases, it is adaptive submodular. We study the RMSB and RMCB problem under both the speical submodular cases and general non-submodular cases, and propose RMSBSolver and RMCBSolver to solve them with strong theoretical guarantees, respectively. Especially, we give a data-dependent approximation ratio for RMSB problem under the general non-submodular cases. Finally, we evaluate our proposed algorithms by conducting experiments on real datasets, and show the effectiveness and accuracy of our solutions.
\end{abstract}

\begin{IEEEkeywords}
Collaborate Game model, Online Social Networks (OSNs), Revenue maximization, Adaptive strategy, Non-submodular, Approximation Algorithm
\end{IEEEkeywords}}

\maketitle

\IEEEdisplaynontitleabstractindextext

%
\IEEEpeerreviewmaketitle

\IEEEraisesectionheading{\section{Introduction}\label{sec:introduction}}

%
%
%
%
\IEEEPARstart{T}{he} prosperous development of online social networks (OSNs) has derived a number of famous social platforms, such as Facebook, LinkedIn, Twitter and WeChat, which have become the main means of communication. There are more than 1.52 billion users active daily on Facebook and 321 million users active monthly on Twitter. These social platforms have been taken as an effective advertising channel by many companies to promote their products through "word-of-mouth" effect. It motivates the researches about viral marketing. The OSNs can be represented as an undirected graph, where nodes are users and edges denote the friendships between two users. Viral marketing, proposed by Domingos and Richardson \cite{domingos2001mining} \cite{richardson2002mining}, aims to maximize the follow-up users by giving rewards, coupons, or discounts to a subset of users who are the most influential. Then, Kempe et al. \cite{kempe2003maximizing} formulated Influence Maximization (IM) problem as a combinatorial problem: selects a subset of users as the seed set under the cardinality budget, such that the expected number of users who are influenced by this seed set can be maximized. They proposed two information diffusion models that were accepted by most researchers in subsequent researches: Independent Cascade model (IC-model) and Linear Threshold model (LT-model), and proved IM is NP-hard and its objective function is monotone submodular under these two models, thus, a good approximation can be obtained by natural greedy algorithm \cite{nemhauser1978analysis}. After this milestone work, a series of variant problems based on IM model and adapted to different application scenarios were emerged. For example, Revenue Maximization (RM), also called Profit Maximization, is a representative among them. RM problem is commonly studied in the literature \cite{arthur2009pricing} \cite{lu2012profit} \cite{tang2016profit} \cite{zhang2016profit} \cite{guo2019multi} \cite{8952599}. More different factors, such as product price, discount, cost \cite{zhou2015bilevel} \cite{lu2016pricing} \cite{yang2016continuous} \cite{ajorlou2016dynamic} and their impact on propagation, need to be considered in maximizing revenue.

From the perspective of companies, their objecitves are to maximize the exptected total revenue by selecting a subset of users from the whole network. However, most existing researches about RM problem were based on non-adaptive seeding strategy, where we select all seed nodes in one batch but do not consider the influence diffusion process. Thus, non-adaptive seeding is not the best choice to solve RM problem. Compared to that, adaptive one can response with a better seeding strategy because of making decision according to the real-time feedback from the users. Not only can it take advantage of the limited budget more wisely, but also adapt to the dynamic features of a social network. For example, the network topology is changing at any time because users join or leave, and friendships construct or terminate. In addition, most previous researches were based on the simple information diffusion model, such as IC-model or LT-model, where they considered the number of follow-up users that accept our information cascade. It applies user as a measurement unit to quantify the revenue, where each user corresponds to a fixed value of revenue. Sometimes, this is not enough and we need to apply activity as a measurement unit to quantify the revenue. Suppose some user in facebook is invited by a company to initiate an activity, after accepting and initiating this activity, the nerghbor users of this initiator would be infected. The initiator's friends, or friends of friends, may be participate in this activity. Thus, Guo et al. \cite{guo2019khop} propose Collaborate Game model to characterize this scene, where the total revenue gained by the company is correlated to the number of successfully initiated activity. The number of participants is different among different activities. For a single activity, the revenue we can obtain from this activity is related to the number of its participants, but not simple linear relationship. In general, the closer the distance is from the initiator, the more business benefit this participant will provide.

Based on the Collaborate Game model, we propose the Revenue Maximization under the Size Budget (RMSB) problem. RMSB aims to maximize the total revenue by inviting a small number of users to initiate an activity in an adaptive strategy. We adopt adaptive seeding strategy because: before sending an invitation to a potential initiator, we do not know whether she will accept it and how many users around her will follow and join together. By adaptive strategy, we can observe the actual state of users and edges after sending an invitation.
Besides, for a company, they should not only consider the largest revenue, but also consider the promotion effect of this product. According to that, we propose the Revenue Maximization under the Community Budget (RMCB) problem, where the number of invited users in each community of the targeted network is constrained. In this way, we can make sure there exists several activities happened in every area of the entire network, thus, the balance of influence distribution is guaranteed. Relied on the adaptive greedy policy, we design RMSBSolver and RMCBSolver algorithms to solve RMSB and RMCB respectively. The objective function of RMSB and RMCB problem is adaptive monotone and not adaptive submodular, but in some special cases, it is adaptive submodular. In this paper, the community budget of RMCB can be represented by use of partition matroid, and we can obtain a constant approximation ratio for RMCB problem under the special submodular cases. For the RMSB problem, it is an open question left by \cite{guo2019khop} how to get a theoretical bound for general non-submodular cases. In this paper, we attempt to solve it. Smith et al. \cite{smith2018approximately} proposed the concept of adaptive primal curvature and adaptive total primal curvature. We give a bound for adaptive total primal curvature and then, with the help of that, we can obtain a data-dependent approximation ratio for RMSB problem. Our contributions are summarized as follows:

\begin{enumerate}
	\item Based on Collaborate Game model, we propose RMSB and RMCB problem, which are adaptive monotone but not adaptive submodular.
	\item We design RMSBSolver and RMCBSolver algorithms to RMSB and RMCB problem, generalize RMCB to partition matroid, and obtain a $(1/2)$-approximation under the special submodular cases.
	\item To RMSB problem, we prove the solution returned by RMCBSolver satisfies a data-dependent $(1-e^{-\frac{1}{\delta}})$-approximation under the general non-submodular cases.
	\item Our proposed algorithms are evaluted on real-world datasets, and it shows that our algorithms are effective and better than other baseline algorithms.
\end{enumerate}

\textbf{Organiztion:} In Section 2, we survey the related work in RM and adaptive submodular optimization. We then present RMSB and RMCB problem in Section 3, discuss the algorthms in Section 4 and introduce the comprehensive theoretical analysis in Section 5. Finally, we conduct experiments and conclude in Section 6 and Section 7.

\section{Related Work}
Domingos and Richardson \cite{domingos2001mining} \cite{richardson2002mining} were the first to study viral marketing and the value of customers in social networks. Kempe et al. \cite{kempe2003maximizing} studied IM as a discrete optimization problem and generalized IC-model and LT-model to triggering model, who provided us with a greedy algorithm and a constant approximation ratio. RM is an important variant problem of IM, and its existing researches focused on in the non-adaptive setting. \cite{lu2012profit} \cite{tang2016profit} studied the problem that selects quality seed users such that maximizing revenue with the help of influence diffusion. Lu et al. \cite{lu2012profit} extended the LT-model to include prices and valuation, who used a heuristic unbedgeted greedy framework to solve this problem. Tang et al. \cite{tang2016profit} applied deterministic and randomized double greedy algorithms \cite{buchbinder2015tight} to solve the RM problem. If the objective function is non-negative and submodular, they obtained a $(1/3)$- and $(1/2)$-approximation ratio respectively. Zhang et al. \cite{zhang2016profit} investigated RM problem with multiple adoptions, which aimed at maximizing the overall profit across all products. Liu et al. \cite{liu2018profit} considered RM with coupons in his new model, independent cascade model with coupons and valuations (IC-CV), and solved it based on local search algorithm \cite{feige2011maximizing}. Recently, Tong et al. \cite{tong2018coupon} designed randomized algorithms, called simulation-based and realization-based RM, to address the RM with coupons and achieved a $(1/2)$-approximation with high probability. Guo et al. \cite{guo2019novel} proposed a composed influence model with complementary products and gave a solution by use of sandwich approximation framework \cite{lu2015competition}.

In the adaptive setting, Golovin et al. \cite{golovin2011adaptive} were the first to study the adaptive submodular optimization problem. Similar to the monotonicity and submodularity of set function, they extended these two concepts to adaptive version, adaptive monotonicity and adaptive submodularity, and proved that the solution returned by adaptive greedy policy is a $(1-1/e)$-approximation if the objective function is adaptive monotone and adaptive submodular. Applied it to social networks, Tong et al. \cite{tong2019adaptive} provided a systematic study on the adaptive influence maximization problem with different (partial or full) feedback model, especially for the algorithmic analysis of the scenarios when it is not adaptive submodular. Smith et al. \cite{smith2018approximately} introduced two important concepts, adaptive primal curvature and adaptive total primal curvature, to obtain a valid approximation ratio for adaptive greedy policy when the objective function is adaptive monotone but not adaptive submodular. Further, when the objective function is not adaptive monotone, but adaptive submodular, Gotoves et al. \cite{gotovos2015non} extended random greedy algorithm to adaptive policy and obtained a $(1/e)$-approximation. Other researches on the application of adaptive strategy can refer to \cite{gabillon2013adaptive} \cite{fern2017adaptive} \cite{yuan2017adaptive} \cite{han2018efficient}.

\section{Problem and Preliminaries}
In this section, we talk about some preliminary knowledges and concepts to the rest of paper, which maily includes model review and problem definition.

\subsection{Model Recapitulation}
The problem in this paper is an extension based on Collaborate Game model \cite{guo2019khop}, thus, we need to review it briefly here. If you want to know more details about Collaborate Game model, please read \cite{guo2019khop}. The Collaborate Game model is defined in the targeted network, which is an undirected graph $G=(V,E)$ where $V=\{v_1,v_2,...,v_n\}$ is the set of $n$ users and $E=\{e_1,e_2,...,e_m\}$ is the set of $m$ edges. If the edge $e=\{u,v\}$ exists, it means user $u$ and $v$ are friend, and there is a probability $p_e\in[0,1]$ representing their intimacy. For each user $v$, we denote $N(v)$ as the set of users who are the neighbor (friend) of $v$. 
\begin{defn}[Collaborate Game \cite{guo2019khop}]
When a game company invites a user $u\in V$ to play their game, she accepts this invitation to launch this game with probability $\theta_u$. If she accepts it, we call her as "initiator", and in round $1$, the initiator $u$ will invite each of her friend $v\in N(u)$ to participate in this game with success probability $p_{uv}$. We call user $u$ as 0-hop participant and user $v\in N(u)$ who accepts the invitation from $u$ as 1-hop participant. In round $i$, each (i-1)-hop participant $u'$ will invite each of her friend $v'\in N(u')$ to participate in this game with success probability $p_{u'v'}$. Then, the process terminates until finishing round $k$.
\end{defn}

Then, we have a "Tendency Assumption": assuming that a user $u$ is an i-hop participant to one initiator $x$ and a j-hop participant to another initiator $y$, where $i\leq j$, at this time, we consider user $u$ will choose to be an i-hop participant to initiator $x$. There are two additional parameters associated to our model: 
\begin{enumerate}
	\item Acceptance vector $\vec{\theta}=(\theta_1,\theta_2,...,\theta_n)$, where each element $\theta_u\in[0,1]$ is an acceptance probability for user $u\in V$. It quantifies the likelihood for user $u$ to accept to be an initiator when she receives the invitation from the company. In this paper, we assume $\theta_u$ is uniformly distributed in interval $[0,1]$.
	\item Revenue vector $\vec{R}=(R_0, R_1,R_2,...,R_k)$, where each element $R_i\in\mathbb{Z}^+$ is the revenue the company can gain from an i-hop participant. In this paper, we assume $R_0\geq R_1\geq R_2\geq...\geq R_k$ according to the "Benefit Diminishing" assumption \cite{guo2019khop}.
\end{enumerate}
\begin{rem}
The Collaborate Game model is based on such a scenario: A game company wants to promote their new multiplayer game over the targeted network by inviting some users to play this game. Once they accept it, they will attract their friends recursively to participate in this game. Then, game company can obtain revenue from it. Since we assume that the initiator's influence range is at most k-hop from her, this model is called k-hop Collaborate Game model as well. Thus, the "Benefit Diminishing" assumption means those participants whose are far from the initiator will provide less benefit to the company.
\end{rem}

From the perspective of companies, they are not sure whether she will accept it before sending an invitation to a potential initiator, and do not know how many users around her will follow and join together. Therefore, they need to adopt an adaptive strategy. Before determining who should be the next potential initiator, they need to make an observation to the change of states of both users and networks from the last invitation. Given targeted network $G=(V,E)$, for each user $u\in V$, the state of $u$ can be denoted by $X_u\in\{0,1\}\cup\{?\}$, where $X_u=1$ means user $u$ accepts to be an initiator because of the company's invitation and $X_u=0$ means user $u$ rejects to be an initiator. $X_u=?$ means user $u$ is unknown, who did not receive an invitation from the company. The states of all users are $?$ at the beginning. Similarly, for each edge $e=\{u,v\}\in E$, the state of $e$ can be denoted by $Y_e=\{0,1\}\cup\{?\}$, where $Y_e=0$ indicates user $u$ (resp. $v$) did not accept the invitation from user $v$ (resp. $u$) and $Y_e=1$ indicates user $u$ (resp. $v$) is willing to play the game with user $v$ (resp. $u$). Once determined, it cannot be changed. $Y_e=?$ indicates there is no invitation that happens between user $u$ and user $v$. The states of all edges are $?$ at the beginning.

\begin{figure}[!t]
	\centering
	\includegraphics[width=3.5in]{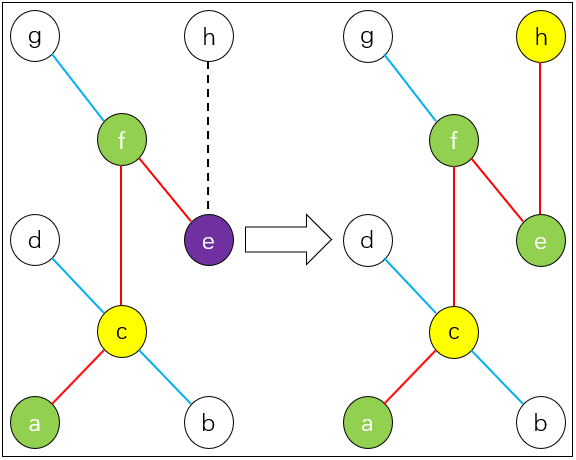}
	\caption{An example that shows the adaptive process: Here, we assume $k=2$, where the state of yellow nodes are $1$, other nodes are $?$; Green and purple nodes are 1-hop and 2-hop participants; The state of red, blue and dotted line are $1$, $0$ and $?$. First, we invite node $c$, the states shown as left part; Then, we invite node $h$, the states shown as right part. We can see that  node $e$ is changed from 2-hop to 1-hop participant because of node $h$. \cite{guo2019khop}}
	\label{fig1}
\end{figure}

After defining the states of user and edge, we have a function $\phi:\{X_u\}_{u\in V}\cup\{Y_e\}_{e\in E}\rightarrow U$ be all possible states, which is called as realization. Thus, we say that $\phi(u)$ is the state of user $u\in V$ and $\phi(e)$ is the state of edge $e\in E$ under the realization $\phi$. We denote by $\Phi$ a random realization and by $\Pr(\phi)=\Pr[\Phi=\phi]$ the probability distribution over all realizations. Besides, each realization should be consistent. Here, each user can only be one of states in $\in\{0,1\}\cup\{?\}$, and each edge can only be one of states in $\{0,1\}\cup\{?\}$ identically. After each pick, our observations so far can be represented as a partial realization $\psi$, which is a function from observed objects to their states. Then, $dom(\psi)$ is the domain of $\psi$, that is, the observed objects in $\psi$. A partial realization $\psi$ is consistent with a realization $\phi$ if they are equal everywhere in the domain of $\psi$, denoted by $\phi\sim\psi$. If $\psi$ and $\psi'$ are both consistent with some $\phi$, and $dom(\psi)\subseteq dom(\psi')$, we say $\psi$ is a subrealization of $\psi'$, denoted by $\psi\subseteq\psi'$. Besides, we denote the observed users in the domain of $\psi$ by $dx(\psi)$, and the observed edges in the domain of $\psi$ by $dy(\psi)$. \cite{guo2019khop}

\subsection{Problem Definition}
At each pick, the game company attempts to send an invitation to a user $u$ according to current partial realization $\psi$, and observe the $u$'s state $\Phi(u)$. If $\Phi(u)=1$, we need to update $\psi$, in other words, update the states of edges whose distance are less than k-hop from user $u$. In this process, user $u$ attracts her nearby users to play together. If $\Phi(u)=0$, we do nothing and go to the next pick. An example is shown as Fig. \ref{fig1}. Let $\pi$ be such an adaptive policy, and $H(\pi,\phi)$ be the set of users who are invited by the game company according to policy $\pi$ under the realization $\phi$. The total revenue gained according to policy $\pi$ under the realization $\phi$ can be defined as follows: \cite{guo2019khop}
\begin{equation}
f(H(\pi,\phi),\phi)=\sum_{i\in[k]}\sum_{u\in D_i(\pi,\phi)}R_i
\end{equation}
where $[k]=\{0,1,2,...,k\}$ and $D_i(\pi,\phi)$ is the set that contains all i-hop participants to the company according to policy $\pi$ under the realization $\phi$, thus, we have
\begin{flalign}
&D_0(\pi,\phi)=\{u|u\in H(\pi,\phi),\phi(u)=1\}\\
&D_i(\pi,\phi)=\nonumber\\
&\left\{u|\exists v\in D_{i-1}(\pi,\phi),\phi(\{u,v\})=1\right\}\backslash\bigcup_{j=0}^{i-1}D_j(\pi,\phi)
\end{flalign}
Finally, we can evaluate the performance of a policy $\pi$ by its expected revenue, and we have
\begin{equation}
f_{avg}(\pi)=\mathbb{E}_\Phi[f(H(\pi,\Phi),\Phi)]
\end{equation}
where the expectation is taken with respect to $\Pr(\Phi=\phi)$. Then, under the k-hop Collaborate Game model, the Revenue Maximization under the Size Budget (RMSB) \cite{guo2019khop} is formulated as follows:
\begin{pro}
Given a targeted network $G=(V,E)$, an acceptance vector $\vec{\theta}$, a size budget $b$ and a revenue vector $\vec{R}$, we aim to find a policy $\pi^*$ such that maximizing the expected revenue under the k-hop Collaborate Game model, $\pi^*\in\arg\max_{\pi}f_{avg}(\pi)$, subject to $|H(\pi,\phi)|\leq b$ for all realization $\phi$.
\end{pro}

However, a size budget is not enough because not only we want to maximize the revenue, but also need to consider the influence distribution. Generally, there exists a community structure given a social network, which is a partition of this network. Given a targeted network $G=(V,E)$, we assume it exists a unique community structure as $\mathcal{C}(G)$ associated with $G$, where $\mathcal{C}(G)=\{C_1,C_2,...,C_r\}$ is a partition of $V$. In other words, it means that $V=\bigcup_{i=1}^rC_i$ and $C_i\cap C_j=\emptyset$ for any $i,j\in\{1,2,...,r\}$. Denote community budget by vector $\vec{b}$, $\vec{b}=(b_1,b_2,...,b_r)$. It means that the number of users that invited to be initiators by the company in each community $C_i\in\mathcal{C}(G)$ cannot be larger than budget $b_i$. Then, under the k-hop Collaborate Game model, the Revenue Maximization under the Community Budget (RMCB) is formulated as follows:

\begin{pro}
	Given a targeted network $G=(V,E)$, a community structure $\mathcal{C}(G)$ associated with $G$, an acceptance vector $\vec{\theta}$, a community budget $\vec{b}$ and a revenue vector $\vec{R}$, we aim to find a policy $\pi^*$ such that maximizing the expected revenue under the k-hop Collaborate Game model, $\pi^*\in\arg\max_{\pi}f_{avg}(\pi)$, subject to $|H(\pi,\phi)\cap C_i|\leq b_i$ for any $C_i\in\mathcal{C}(G)$ and all realization $\phi$.
\end{pro}

Community budget in RMCB problem can be generalized to the matroid constraint, and we introduce some besic concepts about matroid here. A matroid $\mathcal{M}$ is an order pair $\mathcal{M}=(V,\mathcal{I})$, where $V$ is the ground set, node set in this paper, and $\mathcal{I}\subseteq 2^V$ is the collection of independent sets, which satisfies: (1) For all $A\subset B\subseteq V$, if $B\in\mathcal{I}$ then $A\in\mathcal{I}$; (2) For all $A,B\in\mathcal{I}$ with $|B|>|A|$, we have $\exists v\in B\backslash A$ such that $A\cup\{v\}\in\mathcal{I}$. Given a matroid $\mathcal{M}$, the matroid constraint means that a feasible solution $H(\pi,\phi)\in\mathcal{I}(\mathcal{M})$. The bases of matroid, denoted by $\mathcal{B}$, are those satisfy $\mathcal{B}\in\mathcal{I}$ and $\nexists v\in V$ such that $\mathcal{B}\cup\{v\}\in\mathcal{I}$. All bases of a matroid have the same size. There is a kind of special matroid, Partition Matroid, that is related to our problem: The ground set $V$ is partitioned into disjoint sets $C_1,C_2,...,C_r$, community structure in this paper, where $V=\bigcup_{i=1}^rC_i$. Let $b_i$, community budget in this paper, be an integer with $0\leq b_i\leq|C_i|$ for any $i\in\{1,2,...,r\}$. $I$ is an indepndent set, $I\in\mathcal{I}$, if and only if $|I\cap C_i|\leq b_i$ for any $i\in\{1,2,...,r\}$. The matroid whose independet sets are defined in this form is called partition matroid. Therefore, we want to find $\pi^*\in\arg\max_{\pi}f_{avg}(\pi)$ such that $H(\pi,\phi)\in\mathcal{I}$ for all realization $\phi$, and
\begin{equation}
\mathcal{I}=\{S\subseteq V:|S\cap C_i|\leq b_i \text{ for } i=1,2,...,r\}
\end{equation}
where $\mathcal{I}$ is a collection of independent sets defined by partition matroid we talk about before.

\section{Algorithm}
In this section, we propose our algorithms to solve RMSB and RMCB problem. Adaptive greedy policy, proposed by Golovin et al. \cite{golovin2011adaptive}, is an effective method to solve such problems. It can be divided into two steps at each iteration: 
\begin{enumerate}
	\item Send an invitation to the user who has the largest expected increment of revenue based on the partial realization $\psi$.
	\item Observe the state of this user, if she accepts this invitation, update the states of edges whose distance within k-hop from her; Otherwise, back to (1) to invite the next user.
\end{enumerate}
The Conditional Exepected Marginal Benefit under the adaptive surrounding can be shown as follows:
\begin{defn}[Conditional Expected Marginal Benefit]
	Given a partial realization $\psi$ and an user $u$, the conditional expected marginal benefit of $u$ conditioned on observed $\psi$ is
	\begin{equation}
	\Delta(u|\psi)=\mathbb{E}\left[f(dx(\psi)\cup\{u\},\Phi)-f(dx(\psi),\Phi)|\Phi\sim\psi\right]
	\end{equation}
\end{defn}
\noindent
In our RMSB and RMCB problem, $\Delta(u|\psi)$ is the expected revenue increment of user $u$ by sending an invitation to initiate user $u$. It conditions on the previous observation to users and edges, partial realization $\psi$, and the expectation is taken over all realization that are consistant with $\psi$.

Based on adaptive greedy policy \cite{golovin2011adaptive}, RMSBSolver and RMCBSolver is proposed, shown as Algorithm \ref{a1} and Algorihtm \ref{a2}. In algorithm \ref{a1}, we invite a user with the largest expected revenue increment at each iteration until the number of invitation is larger than size budget $b$. However, in Algorithm \ref{a2}, it is more complicated. At each iteration, we need to check whether the candidate set $I$ is empty. This candidate set is determined by the collection of independent set $\mathcal{I}$, shown as Equation (5), given community budget $\vec{b}$. If this candidate set is empty, it means that the number of invitations at each community reaches its budget constraint, thus, stop inviting at that time.

\begin{algorithm}[!t]
	\caption{\textbf{RMSBSolver $(G,f,\vec{\theta},\vec{R},b,k)$}}\label{a1}
	\begin{algorithmic}[1]
		\STATE Initialize: $H\leftarrow\emptyset$, $\psi\leftarrow\emptyset$
		\FOR {$i=1$ to $b$}
		\FOR {user $u\in V\backslash H$}
		\STATE $\Delta(u|\psi)\leftarrow$ Computed by Monte-Carlo simulation or simplified method of Section 4.2 in \cite{guo2019khop}
		\ENDFOR
		\STATE Select $u_i\in\arg\max_{u\in V\backslash H}\Delta(u|\psi)$
		\STATE $H\leftarrow H\cup\{u_i\}$
		\IF {$u_i$ accepts the invitation}
		\STATE Update $\psi$, the states of the edges whose distance are less than k-hop from $u_i$
		\ENDIF
		\ENDFOR
		\RETURN $H,\psi$
	\end{algorithmic}
\end{algorithm}

At each iteration, we need to compute the value of $\Delta(u|\psi)$ given current partial realization, which can be done by Monte-Carlo simulation: running this diffusion process many times and then take the average of them. In order to achieve satisfactory accuracy, the number of simulations cannot be too small, however, it increases the running time greatly. To improve its scabability, a simplified computing method was proposed in Algorithm \ref{a2}, Section 4.2 of \cite{guo2019khop}. Its core idea is computing the value of $\Delta(u|\psi)$ by use of a technique similar to breath-first searching instead of Monto carlo simulation. It fixes all potential i-hop participants for $i\in[k]$ in advance, ignores some cases where the probability of occurrence is low and computes the probability of potential participants to join this game from up to down. The time complexity to compute each $\Delta(u|\psi)$ is reduced to $O(mn)$. The effectiveness of this method was validated by experiments on real-world dataset, and it reduces the running time significantly under the premise of ensuring accuracy. Besides, similar to \cite{guo2019khop}, we do not need to compute $\Delta(u|\psi)$ for each user $u\in I$ repeatedly at each iteration. If, at iteration $i$, the selected user $u_{i-1}$ at the last iteration rejects the invitation, we do not need to update any $\Delta(u|\psi)$. Otherwise, we only need to update $\Delta(u|\psi)$ for those users whose distences from $u_{i-1}$ is less than or equal to $2k$ at iteration $i$. It improves the efficieny of Algorithm \ref{a1} and Algorithm \ref{a2} further.

\begin{algorithm}[!t]
	\caption{\textbf{RMCBSolver $(G,\mathcal{C}(G), f,\vec{\theta},\vec{R},\vec{b},k)$}}\label{a2}
	\begin{algorithmic}[1]
		\STATE Initialize: $H\leftarrow\emptyset$, $\psi\leftarrow\emptyset$
		\WHILE {$True$}
		\STATE $I\leftarrow\{v\in V\backslash H:H\cup\{v\}\in\mathcal{I}\}$
		\IF {$I=\emptyset$}
		\STATE break
		\ENDIF
		\FOR {user $u\in I$}
		\STATE $\Delta(u|\psi)\leftarrow$ Computed by Monte Carlo simulation or simplified method of Section 4.2 in \cite{guo2019khop}
		\ENDFOR
		\STATE Select $u_i\in\arg\max_{u\in I}\Delta(u|\psi)$
		\STATE $H\leftarrow H\cup\{u_i\}$
		\IF {$u_i$ accepts the invitation}
		\STATE Update $\psi$, the states of the edges whose distance are less than k-hop from $u_i$
		\ENDIF
		\ENDWHILE
		\RETURN $H,\psi$
	\end{algorithmic}
\end{algorithm}

\section{Theoretical Analysis}
In this section, we discuss some related conclusions about RMSB and RMCB problem, and then extend them to generalized non-submodular cases.
\subsection{Related Conclusions}
Before starting our discussion, we introduce two important concepts, defined in \cite{golovin2011adaptive}, as follows:
\begin{defn}[Adaptive Monotone]
	A function $f$ is adaptive monotone with respect to distribution $\Pr(\phi)$ if the conditional expected marginal benefit of any user $u$ is nonnegative, i.e., for all $\psi$ with $\Pr[\Phi\sim\psi]>0$ and all $u\in V(G)\backslash dx(\psi)$, we have
	\begin{equation}
	\Delta(u|\psi)\geq 0
	\end{equation}
\end{defn}
\begin{defn}[Adaptive Submodular]
	A function $f$ is adaptive submodular with respect to distribution $\Pr(\phi)$ if the conditional expected marginal benefit of any user $u$ does not increase as more states of users and edges are observed, i.e., for all $\psi$ and $\psi'$ such that $\psi$ is a subrealization of $\psi'$ ($\psi\subseteq \psi'$) and all $u\in V(G)\backslash dx(\psi')$, we have
	\begin{equation}
	\Delta(u|\psi)\geq\Delta(u|\psi')
	\end{equation}
\end{defn}
Our RMSB and RMCB problem are two instances of adaptive optimization problem, thus, determined theoretical bounds can be obtained for both instances by adaptive greedy policy if our objective function, Equation (4), is adapative monotone and adaptive submodular. Based on \cite{guo2019khop}, we have seveal conclusions as follows:
\begin{thm}
	The RMSB and RMCB problem is NP-hard, and there are no solutions with polynomial time unless NP$=$P.
\end{thm}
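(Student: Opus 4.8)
The plan is to prove hardness by a polynomial‑time reduction from a classical NP‑complete problem into a degenerate special case of RMSB, and then to note that RMCB contains RMSB as a sub‑case. First I would kill all the randomness in the model: take the hop bound $k=1$, set every acceptance value $\theta_u=1$ and every edge intimacy $p_e=1$, and fix the revenue vector $\vec{R}=(1,1)$. Under these choices there is essentially a single realization $\phi$, so any adaptive policy $\pi$ that ends up inviting a user set $H$ is equivalent to the non‑adaptive choice of $H$; unwinding the definitions of $D_0,D_1$ (the Tendency Assumption is vacuous here since a vertex of $H$ is always $0$‑hop) gives $f_{avg}(\pi)=f(H,\phi)=|N[H]|$, where $N[H]=H\cup\bigcup_{u\in H}N(u)$ is the closed neighborhood. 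The size constraint $|H(\pi,\phi)|\le b$ becomes $|H|\le b$, so this special case of RMSB is exactly: choose $b$ vertices dominating as many vertices as possible.

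Next I would reduce from \textsc{Set Cover} (equivalently \textsc{Maximum Coverage}). Given a universe $\mathcal{U}=\{x_1,\dots,x_t\}$, sets $\mathcal{S}=\{S_1,\dots,S_m\}$, and an integer $b$, build the bipartite graph with vertex set $\{s_1,\dots,s_m\}\cup\{x_1,\dots,x_t\}$ and an edge $\{s_j,x_i\}$ whenever $x_i\in S_j$; additionally attach to each $s_j$ a sufficiently large private star of pendant vertices so that every optimal solution is forced to invite only set‑vertices (and to use its full budget on them, since each extra set‑vertex contributes more than any element‑vertex ever could). Then for $H\subseteq\{s_1,\dots,s_m\}$ with $|H|=b$ one has $|N[H]|=b(1+P)+\bigl|\bigcup_{s_j\in H}S_j\bigr|$ for the fixed star size $P$, so an optimal RMSB policy on this instance recovers $b$ sets covering the maximum number of elements, and in particular decides whether $b$ sets cover all of $\mathcal{U}$. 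Since the construction is polynomial and \textsc{Set Cover} is NP‑complete, RMSB is NP‑hard, hence admits no polynomial‑time exact algorithm unless $\mathrm{P}=\mathrm{NP}$. Finally, taking the trivial community structure $\mathcal{C}(G)=\{V\}$ with community budget $\vec{b}=(b)$ makes RMCB literally identical to RMSB, so the same instance proves RMCB NP‑hard. (Alternatively, one may just invoke the NP‑hardness of the non‑adaptive revenue‑maximization problem under the $k$‑hop Collaborate Game model already established in \cite{guo2019khop}, since a fixed seed set is a degenerate adaptive policy and the two formulations coincide on deterministic instances.)

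The main obstacle I expect is not the combinatorics but the bookkeeping needed to certify that the degenerate instance is a legitimate instance of the model as stated — in particular, whether fixing $\theta_u$ deterministically is admissible given the modeling assumption that $\theta_u$ is uniform on $[0,1]$, and the careful verification that the $k$‑hop propagation rule together with the Tendency Assumption really does collapse to the closed‑neighborhood count when $k=1$ and all probabilities equal $1$. If one insists on keeping $\theta_u$ genuinely random, a little more work is needed: either argue directly that computing $f_{avg}$ of the best non‑adaptive policy already encodes \textsc{Maximum Coverage} in expectation, or rescale $\vec{R}$ so that the deterministic coverage term dominates the stochastic lower‑order contributions; in every version the NP‑hardness conclusion is the same.
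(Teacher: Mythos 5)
Your proposal is correct and follows essentially the same route as the paper: degenerate the model to a deterministic instance (unit revenues, unit acceptance probabilities, unit edge probabilities), reduce from Maximum Coverage, and then observe that RMCB with a single community and scalar budget is exactly RMSB. The paper simply cites \cite{guo2019khop} for the RMSB reduction rather than spelling out the bipartite construction, so your write-up is a more detailed (and more carefully directed) version of the same argument.
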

\begin{proof}
	The RMSB is NP-hard \cite{guo2019khop}, because it can be reduced to Maximum Coverage problem when setting revenue vector $\vec{R}=\vec{1}$, acceptance vector $\vec{\theta}=\vec{1}$ and $p_e=1$ for each $e\in E$. Then, when there is only one community in targeted network $G$, and community budget is a scalar, at this moment, RMCB can be reduced to RMSB problem. Thus, RMCB problem is NP-hard as well.
\end{proof}
\begin{lem}
	The objective function $f$ of the RMSB and RMCB problem is adaptive monotone.
\end{lem}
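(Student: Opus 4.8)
The plan is to prove the stronger \emph{pointwise} statement that for every realization $\phi$, every set $S\subseteq V$, and every $u\in V\setminus S$ we have $f(S\cup\{u\},\phi)\ge f(S,\phi)$. Adaptive monotonicity then follows at once: for any $\psi$ with $\Pr[\Phi\sim\psi]>0$ and any $u\notin dx(\psi)$, the quantity $\Delta(u|\psi)=\mathbb{E}\!\left[f(dx(\psi)\cup\{u\},\Phi)-f(dx(\psi),\Phi)\mid\Phi\sim\psi\right]$ is the conditional expectation of a random variable that is nonnegative on every $\phi\sim\psi$, hence $\Delta(u|\psi)\ge 0$. So the whole argument reduces to showing that the set function $f(\cdot,\phi)$ is monotone nondecreasing for each fixed $\phi$, and nothing about the budget of RMSB or the partition-matroid budget of RMCB enters here, so both problems are handled simultaneously.

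First I would fix $\phi$, let $G_\phi=(V,\{e\in E:\phi(e)=1\})$ be the subgraph of active edges, and let $A=D_0(S,\phi)=\{v\in S:\phi(v)=1\}$ be the set of accepting initiators. The key structural claim is that, for each $i\in[k]$,
\[
D_i(S,\phi)=\{v\in V:d_{G_\phi}(v,A)=i\},
\]
where $d_{G_\phi}(v,A)=\min_{a\in A}d_{G_\phi}(v,a)$ is the shortest-path distance in $G_\phi$ (with $d_{G_\phi}(v,\emptyset)=\infty$). This is proved by induction on $i$ straight from the recursion defining $D_i$: the $G_\phi$-neighborhood of the level-$(i-1)$ set is exactly the nodes at distance $i$ together with some nodes at distance $<i$ (triangle inequality rules out distance $>i$), and the set-difference $\setminus\bigcup_{j<i}D_j$ removes precisely the latter. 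This set-difference is also where the Tendency Assumption is encoded: each participant is counted once, at its minimum hop, which is exactly its $G_\phi$-distance to the initiator set. Consequently $f(S,\phi)=\sum_{i=0}^{k}|D_i(S,\phi)|\,R_i=\sum_{v:\,d_{G_\phi}(v,A)\le k}R_{d_{G_\phi}(v,A)}$.

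Next I would split into two cases. If $\phi(u)=0$, then $D_0(S\cup\{u\},\phi)=A$ as well, so every $D_i$ is unchanged and $f(S\cup\{u\},\phi)=f(S,\phi)$. If $\phi(u)=1$, the initiator set becomes $A'=A\cup\{u\}\supseteq A$ while the active-edge graph $G_\phi$ is unchanged, so adding a source to a breadth-first search can only shrink distances: $d_{G_\phi}(v,A')\le d_{G_\phi}(v,A)$ for every $v$. Combining this with the Benefit Diminishing assumption $R_0\ge R_1\ge\cdots\ge R_k$ (and $R_i\in\mathbb{Z}^+$): every $v$ that was a participant (i.e. $d_{G_\phi}(v,A)\le k$) is still a participant and now contributes $R_{d_{G_\phi}(v,A')}\ge R_{d_{G_\phi}(v,A)}$, while any newly reached participant (one with $d_{G_\phi}(v,A)>k\ge d_{G_\phi}(v,A')$) contributes a nonnegative $R_{d_{G_\phi}(v,A')}$. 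Summing over $v$ gives $f(S\cup\{u\},\phi)\ge f(S,\phi)$, which closes the argument.

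The step I expect to be the real obstacle is the structural claim $D_i(S,\phi)=\{v:d_{G_\phi}(v,A)=i\}$ and, relatedly, pinning down the role of the Tendency Assumption precisely: one must verify carefully that introducing a new initiator never pushes an existing participant to a strictly larger hop or drops it from the participant set, and that no participant is double-counted across levels. Once the BFS-distance characterization is in hand, monotonicity of BFS distance under enlarging the source set together with the nonincreasing revenue vector makes the rest pure bookkeeping.
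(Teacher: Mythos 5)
Your proof is correct. The paper itself states this lemma without proof (it is inherited from the cited Collaborate Game paper), so there is no in-paper argument to compare against, but your route is the standard and natural one: reduce adaptive monotonicity to pointwise monotonicity of the realization-level set function $f(\cdot,\phi)$, which immediately gives $\Delta(u|\psi)\ge 0$ as a conditional expectation of a nonnegative quantity. Your key structural claim $D_i(S,\phi)=\{v: d_{G_\phi}(v,A)=i\}$ does follow by the induction you sketch from Equations (2)--(3) (the set-difference $\setminus\bigcup_{j<i}D_j$ is precisely the formal encoding of the Tendency Assumption, as you observe), and from there the argument is airtight: enlarging the source set of a BFS can only decrease distances, so with $R_0\ge R_1\ge\cdots\ge R_k>0$ every existing participant's contribution is nondecreasing and every newly reached participant contributes positively. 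The only cosmetic point worth noting is that the constraint structure (size budget versus partition matroid) is indeed irrelevant to monotonicity of $f$, exactly as you say, which is why the single argument covers both RMSB and RMCB.
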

\begin{lem}
	The objective function $f$ of the RMSB and RMCB problem is not adaptive submodular.
\end{lem}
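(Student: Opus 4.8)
The plan is to refute adaptive submodularity by an explicit counterexample: I construct a small instance of the $k$-hop Collaborate Game model, two partial realizations $\psi\subseteq\psi'$, and a user $u\notin dx(\psi')$ for which $\Delta(u|\psi)<\Delta(u|\psi')$, contradicting the defining inequality of Definition of adaptive submodular. The mechanism I exploit is the shape of the feedback: once an invited initiator accepts, the observation reveals the states of \emph{all} edges lying within its $k$-hop ball, including edges its own cascade never traverses because that cascade already died along a closer inactive edge. Such revealed active edges are pure ``good news'' for a user invited afterwards, and conditioning that user's expected marginal benefit on this good news can only raise it, whereas adaptive submodularity demands that marginal benefits never increase as more is observed. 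This is precisely where the model departs from full-adoption-feedback influence maximization, in which the revealed information is tied to what the cascade actually reached and the property does survive.

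Concretely I take $k=2$, revenue vector $\vec{R}=(R_0,R_1,R_2)$ with $R_0\ge R_1\ge R_2\ge 1$, all edge probabilities $p_e$ and acceptance probabilities $\theta_u$ strictly inside $(0,1)$ (so every combination of observed states has positive probability), and a gadget on users $z,v,w,u$ with edges $\{z,v\},\{z,w\},\{v,w\},\{v,u\}$. Set $\psi=\emptyset$ and let $\psi'$ be the partial realization reached by first inviting $z$ and observing that $z$ accepts, that $\{z,v\}$ and $\{z,w\}$ are inactive, and, from the $k$-hop-ball feedback around $z$, that $\{v,w\}$ is active. Then $\psi\subseteq\psi'$, $u\notin dx(\psi')=\{z\}$, and $\Pr[\Phi\sim\psi']>0$, so this is a legitimate test case.

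The remaining step is a finite computation. Under $\psi'$ the cascade of $z$ dies immediately, so $f(dx(\psi'),\Phi)=R_0$ is constant and cancels in the marginal; inviting $u$ then, with probability $\theta_u$, makes $v$ a $1$-hop participant and, because $\{v,w\}$ is \emph{known} active, makes $w$ a $2$-hop participant with the same probability, while $\{v,z\}$ being known inactive contributes nothing, giving $\Delta(u|\psi')=\theta_u\bigl(R_0+p_{uv}R_1+p_{uv}R_2\bigr)$ (or even $\theta_u(R_0+R_1+R_2)$ if one reads the feedback as also revealing $\{v,u\}$). Under $\psi=\emptyset$ nothing is known, so $u$ reaches $v$ only with probability $p_{uv}$ and then reaches $w$ and $z$ only with probabilities $p_{vw}$ and $p_{vz}$, giving $\Delta(u|\emptyset)=\theta_u\bigl(R_0+p_{uv}R_1+p_{uv}(p_{vw}+p_{vz})R_2\bigr)$. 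Choosing $p_{vw}=p_{vz}=1/3$, so that $p_{vw}+p_{vz}<1$, yields $\Delta(u|\emptyset)<\Delta(u|\psi')$, the desired violation. Since $f$ is the same objective function in RMCB and adaptive submodularity is a property of $f$ and $\Pr(\Phi)$ alone, independent of the size budget or the partition-matroid constraint, the same instance shows $f$ is not adaptive submodular for RMCB either.

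The only genuinely delicate point, and the main obstacle, is being careful about the feedback model inherited from \cite{guo2019khop}: the argument needs that inviting an accepted initiator reveals the edge states throughout its $k$-hop ball rather than only along the edges its cascade happened to use, so one should check against that model's formal definition that this holds and that $\psi'$ is a legitimate positive-probability partial realization. A secondary care point is keeping the gadget minimal so that no stray contributions, such as other neighbors of $v$ or $w$ already being a participant in the base term, intrude on either side of the inequality; everything past that is bookkeeping.
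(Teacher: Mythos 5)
The paper itself never writes out a proof of this lemma (it is inherited from \cite{guo2019khop}), so a counterexample is indeed the only sensible route, and your overall strategy --- exhibit $\psi\subseteq\psi'$ and $u\notin dx(\psi')$ with $\Delta(u|\psi)<\Delta(u|\psi')$ --- is the right one. The problem is exactly the point you flagged and then assumed away: the feedback model. Your $\psi'$ requires that inviting $z$ reveals the state of $\{v,w\}$ even though both $\{z,v\}$ and $\{z,w\}$ are observed inactive, so $z$'s cascade never reaches $v$ or $w$. That contradicts the model's own semantics: an edge's state is set only when an invitation actually occurs along it (``$Y_e=?$ indicates there is no invitation that happens between user $u$ and user $v$''), and Fig.~1 confirms this --- after inviting $c$, edges inside $c$'s $k$-hop ball that the cascade did not traverse remain dotted (state $?$). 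The phrase ``update the states of the edges whose distance are less than k-hop from $u_i$'' refers to the edges probed by the cascade, not to the entire ball. Under this (correct) reading, every realization in which $v$ and $w$ are never participants has $\phi(\{v,w\})=?$, so your $\psi'$ satisfies $\Pr[\Phi\sim\psi']=0$ and is not an observable partial realization; the claimed violation is vacuous. More broadly, the mechanism you rely on --- a revealed \emph{active} edge lying beyond a dead frontier --- cannot occur here: whenever $\{v,w\}$ is revealed active, both endpoints are already participants at hops $\leq k$, which works \emph{against} the marginal of a later seed.

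The construction can be repaired by reversing the direction of the ``good news'': condition on learning that a competing cascade \emph{failed} to capture a neighbor of $u$. Take $k=2$, the path $z-a-w-u$, and let $\psi=\{X_z=1,\,Y_{\{z,a\}}=1\}$ and $\psi'=\psi\cup\{Y_{\{a,w\}}=0\}$; here $\psi'$ is precisely the (positive-probability) observation after inviting $z$ when the cascade dies at $\{a,w\}$, and $\psi\subseteq\psi'$ with $u\notin dx(\psi')$. Under $\psi$ the node $w$ is already a $2$-hop participant with probability $p_{aw}$, so by the Tendency Assumption $u$'s upgrade of $w$ to a $1$-hop participant is worth only $R_1-R_2$ in that event, giving $\Delta(u|\psi)=\theta_u\bigl(R_0+p_{uw}(R_1-p_{aw}R_2)\bigr)$, whereas under $\psi'$ the node $w$ is confirmed uncaptured and $\Delta(u|\psi')=\theta_u\bigl(R_0+p_{uw}R_1\bigr)>\Delta(u|\psi)$ whenever $p_{aw},R_2>0$. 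Both partial realizations are consistent with the traversal-based feedback, so this closes the gap; note it also shows why the special cases $k\leq 1$ or $p_e\equiv 1$ of Lemma~3 escape the argument (no node can be a ``far'' participant with residual uncertainty).
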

\begin{lem}[\cite{guo2019khop}]
	If the RMSB and RMCB problem conform one of following two special cases:
	\begin{enumerate}
		\item $k\leq 1$
		\item For all $\{u,v\}\in E$, $p_{uv}=1$
	\end{enumerate}
	The objective function $f$ is adaptive submodular.
\end{lem}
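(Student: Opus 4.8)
The plan is to verify the two inequalities defining adaptive submodularity — $\Delta(u\mid\psi)\ge\Delta(u\mid\psi')$ whenever $\psi\subseteq\psi'$ and $u\in V\setminus dx(\psi')$ — by first rewriting $f$ per realization and then exploiting that the acceptance variables $\{X_w\}_{w\in V}$ are mutually independent and independent of the edge variables $\{Y_e\}_{e\in E}$.

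First I would record a per-realization description of $f$. Fix a realization $\phi$, let $A(\phi)=\{w\in V:\phi(w)=1\}$ be the accepting users, and let $d_\phi(S,v)$ be the distance from a user set $S$ to $v$ in the subgraph of $G$ formed by the edges $e$ with $\phi(e)=1$. From the recursive definition of $D_i(\pi,\phi)$ one reads off that $v$ is an $i$-hop participant of an invited set $H$ exactly when $d_\phi(A(\phi)\cap H,v)=i$; hence, setting $\tilde R_i:=R_i$ for $0\le i\le k$ and $\tilde R_i:=0$ for $i>k$,
\[
 f(H,\phi)=\bar h_\phi\big(A(\phi)\cap H\big),\qquad
 \bar h_\phi(S):=\sum_{v\in V}\tilde R_{\,d_\phi(S,v)} .
\]
Because $R_0\ge R_1\ge\cdots\ge R_k\ge 0$, the sequence $\tilde R$ is non-increasing, so each term equals $\max_{x\in S}\tilde R_{\,d_\phi(x,v)}$, a ``max-of-weights'' function; thus $\bar h_\phi$ is monotone and submodular on $2^V$ for every $\phi$ (combined with the next step this also re-derives adaptive monotonicity, Lemma~1). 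This pointwise submodularity alone will \emph{not} give adaptive submodularity — which is exactly why Lemma~2 holds — because observing an edge alters the conditional law of $\bar h_\Phi$.

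Next I would use independence to collapse the conditional marginal. Since $u\notin dx(\psi')\supseteq dx(\psi)$, the variable $X_u$ is independent of everything recorded in $\psi$, and $Z(\psi):=A(\Phi)\cap dx(\psi)=A(\psi)\cap dx(\psi)$ is a set \emph{determined by} $\psi$, with $u\notin Z(\psi)$. Then $f(dx(\psi)\cup\{u\},\Phi)-f(dx(\psi),\Phi)=\mathbf{1}[X_u=1]\big(\bar h_\Phi(Z(\psi)\cup\{u\})-\bar h_\Phi(Z(\psi))\big)$, and since this difference depends only on edge states — on which conditioning on $X_u=1$ has no effect — taking expectations gives
\[
 \Delta(u\mid\psi)=\theta_u\,\mathbb E\big[\bar h_\Phi(Z(\psi)\cup\{u\})-\bar h_\Phi(Z(\psi))\;\big|\;\Phi\sim\psi\big].
\]
In Case~(2), every edge is live in every realization, so $\bar h_\Phi\equiv\bar h$ is a single fixed monotone submodular function and $\Delta(u\mid\psi)=\theta_u\big(\bar h(Z(\psi)\cup\{u\})-\bar h(Z(\psi))\big)$; since the observed acceptances grow consistently we have $Z(\psi)\subseteq Z(\psi')$, so submodularity of $\bar h$ and $\theta_u\ge0$ immediately give $\Delta(u\mid\psi)\ge\Delta(u\mid\psi')$. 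In Case~(1) with $k=0$ we have $\bar h(S)=R_0|S|$, so $\Delta(u\mid\psi)=R_0\theta_u$ is constant and the inequality is trivial (indeed $f$ is then adaptive modular).

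The remaining case, $k=1$ in Case~(1), is where the work is, since edge randomness is genuinely present, and the hard part will be the feedback bookkeeping. Here I would expand the inner expectation: adding $u$ to the accepting set $Z(\psi)$ contributes $R_0$ for $u$, minus $R_1$ if $u$ is already a $1$-hop participant of $Z(\psi)$, plus $R_1$ for each neighbor $v\in N(u)$ linked to $u$ by a live edge that is not yet a participant of $Z(\psi)$. When $k=1$, whether $u$ (and whether each $v\in N(u)$) is already a participant depends only on edges incident to members of $Z(\psi)$, all of which $\psi$ records; and — the point needing care — any neighbor $v$ that is \emph{not} yet a participant cannot be an accepting initiator in $dx(\psi)$, so the edge $\{u,v\}$ is still unobserved and enters with its prior probability $p_{uv}$. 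This should yield
\[
 \Delta(u\mid\psi)=\theta_u\Big(R_0-R_1\,\iota_u(\psi)+R_1\!\!\sum_{v\in N(u):\,j_v(\psi)=1}\!\! p_{uv}\Big),
\]
with $\iota_u(\psi)=\mathbf{1}[\,u\text{ is already a }1\text{-hop participant of }Z(\psi)\,]$ and $j_v(\psi)=\mathbf{1}[\,v\text{ is not yet a participant of }Z(\psi)\,]$. Finally, going from $\psi$ to $\psi'$ the set of participants only grows, and consistently, so $\iota_u(\psi)\le\iota_u(\psi')$ and $j_v(\psi)\ge j_v(\psi')$ for every $v$; with $R_1\ge0$ and $p_{uv}\ge0$ this forces $\Delta(u\mid\psi)-\Delta(u\mid\psi')\ge0$. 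I expect the delicate point to be justifying ``$v$ not yet a participant $\Rightarrow$ edge $\{u,v\}$ unobserved under $\psi$'' (and the analogous claims behind $\iota_u$) directly from the rule specifying which edges are revealed when an initiator accepts.
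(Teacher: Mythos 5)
The paper itself gives no proof of this lemma --- it is imported verbatim from the cited reference \cite{guo2019khop} --- so there is no in-paper argument to compare against; judged on its own, your proof is correct and complete. The decomposition $f(H,\phi)=\sum_{v}\tilde R_{d_\phi(A(\phi)\cap H,v)}$ is the right per-realization description of the revenue, the factorization $\Delta(u\mid\psi)=\theta_u\,\mathbb E[\bar h_\Phi(Z(\psi)\cup\{u\})-\bar h_\Phi(Z(\psi))\mid\Phi\sim\psi]$ is justified by the independence of $X_u$ from everything in $\mathrm{dom}(\psi)$, and the two cases then go through: with $p_e\equiv 1$ the function $\bar h_\Phi$ is a fixed monotone submodular set function and only $Z(\psi)\subseteq Z(\psi')$ is needed, while for $k=1$ your closed form $\theta_u\bigl(R_0-R_1\iota_u(\psi)+R_1\sum_{v\in N(u):j_v(\psi)=1}p_{uv}\bigr)$ is valid and the monotonicity of $\iota_u$ and $j_v$ in $\psi$ finishes the argument. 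The ``delicate point'' you flag at the end is in fact already discharged by your own observation: under the feedback rule only edges incident to \emph{accepted} initiators are revealed when $k=1$, $u\notin dx(\psi')$ rules out $u$ being one, and any neighbor $v$ with $j_v(\psi)=1$ cannot be one either (an accepted initiator is a $0$-hop participant), so every edge $\{u,v\}$ entering the sum is unobserved and contributes its prior probability $p_{uv}$; the same reasoning shows $\iota_u(\psi)$ and $j_v(\psi)$ are deterministic functions of $\psi$. One stylistic caveat: the argument implicitly restricts to partial realizations reachable under this feedback model (observed edges are exactly those incident to accepted initiators), which is the standard convention for adaptive submodularity and should be stated explicitly if this proof were to be included.
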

\noindent
Those properties suitable to RMSB problem can be applied to RMCB problem, because their objective function is identical, but constraints are different. Therefore, for RMSB problem, we have
\begin{thm}
	The adaptive greedy policy $\pi_s$, given by Algorithm \ref{a1}, for RMSB problem is a $(1-1/e)$-approximate solution when $k\leq 1$ or $p_e=1$ for each $e\in E$. Hence, we have
	\begin{equation}
	f_{avg}(\pi^s)\geq(1-1/e)\cdot f_{avg}(\pi^*)
	\end{equation}
	where $\pi^*$ is optimal policy s.t. $|H(\pi^*,\phi)|\leq b$ for all $\phi$.
\end{thm}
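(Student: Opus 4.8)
The plan is to obtain this theorem as a direct corollary of the adaptive greedy guarantee of Golovin et al. \cite{golovin2011adaptive}. Their result states that if an objective is adaptive monotone and adaptive submodular with respect to the realization distribution $\Pr(\phi)$, then for the cardinality-$b$ adaptive maximization problem the policy that, at each of its $b$ rounds, adds a not-yet-selected element of maximum conditional expected marginal benefit $\Delta(\cdot\mid\psi)$ and then observes the induced feedback is a $(1-1/e)$-approximation to the best policy that invites at most $b$ users in every realization. So the whole argument reduces to two checks: that $f$ satisfies these hypotheses in the two special cases, and that Algorithm \ref{a1} is exactly this greedy policy instantiated with budget $b$.

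For the hypotheses, adaptive monotonicity of $f$ (Lemma 1) holds with no restriction on the model parameters, while adaptive submodularity of $f$ in the regimes $k\le 1$ and $p_e=1$ for every $e\in E$ is precisely Lemma 3. Hence in either regime $f$ is simultaneously adaptive monotone and adaptive submodular, so the Golovin et al. theorem applies.

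For the correspondence, Algorithm \ref{a1} starts from $H=\emptyset$, $\psi=\emptyset$ and, for $i=1,\dots,b$, picks $u_i\in\arg\max_{u\in V\setminus H}\Delta(u\mid\psi)$, sets $H\leftarrow H\cup\{u_i\}$, and, conditioned on $u_i$ accepting, updates $\psi$ with the edge states revealed within $k$ hops of $u_i$. This is exactly the adaptive greedy selection rule under the feedback model induced by the Collaborate Game model, and the loop performs exactly $b$ invitations, so $|H(\pi^s,\phi)|=b$ for every $\phi$ and the size constraint holds with equality. Moreover a user is observed precisely when it is invited, so the domain of the terminal partial realization satisfies $dx(\psi)=H(\pi^s,\phi)$ along every trajectory; therefore $f(dx(\psi),\phi)=f(H(\pi^s,\phi),\phi)$, and $f_{avg}(\pi^s)=\mathbb{E}_\Phi[f(H(\pi^s,\Phi),\Phi)]$ is exactly the quantity controlled by the cited bound. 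Combining the two checks gives $f_{avg}(\pi^s)\ge(1-1/e)\cdot f_{avg}(\pi^*)$.

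The only delicate point I anticipate is the bookkeeping around rejections. When $\Phi(u_i)=0$ the chosen user still consumes one unit of budget and still enters $dx(\psi)$, yet reveals no edge state and contributes nothing to any $D_j(\pi,\phi)$; one must confirm that ``select an element, then observe its possibly vacuous state'' is a legitimate instantiation of the element/observation abstraction of \cite{golovin2011adaptive}, and in particular that the $\Delta(u\mid\psi)$ used by Algorithm \ref{a1} coincides with their conditional expected marginal benefit --- which it does, since both average $f(dx(\psi)\cup\{u\},\Phi)-f(dx(\psi),\Phi)$ over $\Phi\sim\psi$ --- and that the RMSB constraint counts invitations rather than successful initiators, so that the $b$-iteration loop saturates it. Once this identification is made explicit no further combinatorial work is needed and the ratio follows verbatim; the same correspondence is what will later let us transfer the argument to RMCB through the partition-matroid version of the Golovin et al. bound.
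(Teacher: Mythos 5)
Your proposal is correct and follows exactly the paper's argument: invoke Lemma 1 for adaptive monotonicity, Lemma 3 for adaptive submodularity in the two special regimes, and then apply the $(1-1/e)$ adaptive greedy guarantee of Golovin et al.\ under the cardinality constraint. The additional bookkeeping you supply about Algorithm \ref{a1} instantiating the abstract greedy policy (including the treatment of rejections) is more explicit than the paper's two-line proof but does not change the route.
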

\begin{proof}
	From Lemma 1, the objective function $f$ is adaptive monotone, and from Lemma 3, $f$ is adaptive submodular when $k\leq 1$ or $p_e=1$ for each $e\in E$. The adaptive greedy policy is a $(1-1/e)$-approximation under the size constraint according to the conclusion of \cite{golovin2011adaptive}.
\end{proof}

In \cite{golovin2011matroid}, Golovin et al. analyzed the theoretical performance of adaptive submodular optimization under p-independence system. The main conclusions are summarized as follows:
\begin{lem}[\cite{golovin2011matroid}]
Given an adaptive monotone and submodular function $f$ and a p-independence system $(V,\mathcal{I})$, the adaptive $\alpha$-approximate greedy policy $\pi$ under the constraint $\mathcal{I}$ yields and $\alpha/(p+\alpha)$-approximation. Hence, we have
\begin{equation}
f_{avg}(\pi)\geq\left(\frac{\alpha}{p+\alpha}\right)\cdot f_{avg}(\pi^*)
\end{equation}
where $\pi^*$ is optimal policy s.t. $H(\pi^*,\phi)\in\mathcal{I}$ for all $\phi$.
\end{lem}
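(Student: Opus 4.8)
\emph{Proposed proof.} The plan is to lift the classical analysis of the greedy algorithm on a $p$-independence system (Jenkyns; Fisher--Nemhauser--Wolsey) to the adaptive setting, following Golovin and Krause \cite{golovin2011matroid}. Assume without loss of generality that $f(\emptyset,\phi)=0$ for every realization $\phi$, and run the $\alpha$-approximate greedy policy $\pi$ until no feasible element can be added, so that along every realization $\phi$ it produces a \emph{maximal} independent set $S_\phi=(u_1^\phi,\dots,u_{\ell_\phi}^\phi)$ listed in order of selection, with associated partial realizations $\psi_0^\phi\subseteq\psi_1^\phi\subseteq\cdots$ (here $\psi_j^\phi$ is the observation after the $j$-th invitation, so $dx(\psi_j^\phi)=\{u_1^\phi,\dots,u_j^\phi\}$ exactly as in Algorithm \ref{a2}). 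Telescoping the definition of the conditional expected marginal benefit along this trajectory and using the tower property gives the identity $f_{avg}(\pi)=\mathbb{E}_\Phi\big[\sum_{j}\Delta(u_j^\Phi\mid\psi_{j-1}^\Phi)\big]$.

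Second, I would invoke the combinatorial exchange property of $p$-independence systems, applied pointwise for each fixed $\phi$: given the maximal independent set $S_\phi$ returned by greedy and the independent set $O_\phi=H(\pi^*,\phi)$ produced by the optimal policy, there is a charging map $\sigma_\phi\colon O_\phi\to S_\phi$ with (i) $|\sigma_\phi^{-1}(u_j^\phi)|\le p$ for every $j$, and (ii) whenever $\sigma_\phi(o)=u_j^\phi$, the set $\{u_1^\phi,\dots,u_{j-1}^\phi,o\}$ is independent, so that $o$ was an eligible candidate at greedy's $j$-th step. (For the partition-matroid case of actual interest here $p=1$ and this is the standard matroid exchange; the refinement needed for general $p$-systems is classical.) By (ii) and the $\alpha$-approximate greedy rule, $\Delta(u_j^\phi\mid\psi_{j-1}^\phi)\ge\alpha\,\Delta(o\mid\psi_{j-1}^\phi)$ whenever $\sigma_\phi(o)=u_j^\phi$.

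Third, I would bound $f_{avg}(\pi^*)$ from above by the adaptive counterpart of the set-function inequality $f(O)\le f(S)+\sum_{o\in O\setminus S}\Delta(o\mid S)$ --- that is, adaptive monotonicity to compare $\pi^*$ with $\pi^*$ run on top of $\pi$, and adaptive submodularity to split a joint marginal into single-element marginals (the workhorse marginal-gain lemma of \cite{golovin2011matroid}):
\begin{equation}
f_{avg}(\pi^*)\ \le\ f_{avg}(\pi)+\mathbb{E}_\Phi\Big[\sum_{o\in O_\Phi}\Delta(o\mid\psi_{\mathrm{end}}^\Phi)\Big],
\end{equation}
where $\psi_{\mathrm{end}}^\Phi$ is the partial realization after $\pi$ terminates. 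Fix a trajectory and take $o\in O_\Phi$ with $\sigma_\Phi(o)=u_j^\Phi$; since $\psi_{j-1}^\Phi\subseteq\psi_{\mathrm{end}}^\Phi$, adaptive submodularity (now used as ``marginals do not grow under more information'') gives $\Delta(o\mid\psi_{\mathrm{end}}^\Phi)\le\Delta(o\mid\psi_{j-1}^\Phi)\le\tfrac1\alpha\,\Delta(u_j^\Phi\mid\psi_{j-1}^\Phi)$. Summing over $o\in O_\Phi$ and grouping by the charged greedy element, property (i) yields $\sum_{o\in O_\Phi}\Delta(o\mid\psi_{\mathrm{end}}^\Phi)\le\frac{p}{\alpha}\sum_{j}\Delta(u_j^\Phi\mid\psi_{j-1}^\Phi)$. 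Taking expectations over $\Phi$ and substituting the telescoping identity for $f_{avg}(\pi)$ gives $f_{avg}(\pi^*)\le(1+p/\alpha)\,f_{avg}(\pi)=\frac{p+\alpha}{\alpha}\,f_{avg}(\pi)$, which rearranges to the claimed $f_{avg}(\pi)\ge\frac{\alpha}{p+\alpha}f_{avg}(\pi^*)$.

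The step I expect to be the main obstacle is establishing and correctly invoking the displayed adaptive marginal-gain inequality. Unlike the non-adaptive case, $\pi^*$'s selection $O_\Phi$, the greedy trajectory $(\psi_j^\Phi)_j$, and the charging $\sigma_\Phi$ are all random and must be coupled consistently on a single probability space; in particular one must verify that the marginal-gain lemma is applied to $\pi^*$ \emph{conditioned on the world reached by} $\pi$, so that every optimal element is evaluated at a partial realization that genuinely extends each $\psi_{j-1}^\Phi$ along that trajectory --- this is precisely what licenses $\Delta(o\mid\psi_{\mathrm{end}}^\Phi)\le\Delta(o\mid\psi_{j-1}^\Phi)$ and thus the transfer of the $\alpha$-greedy comparison. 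Once this bookkeeping is in place, the $p$-system exchange lemma applies verbatim for each fixed $\phi$ and the remaining steps are the elementary arithmetic above.
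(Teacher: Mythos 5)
The paper does not prove this lemma at all: it is imported verbatim from \cite{golovin2011matroid} as a known result, so there is no internal proof to compare against. Your reconstruction is, in outline, exactly the argument of that reference --- the adaptive lift of the Fisher--Nemhauser--Wolsey/Jenkyns greedy analysis for $p$-independence systems: telescope the greedy gains, charge the optimal set to the greedy trajectory via the per-realization $p$-system exchange lemma, and close the loop with the adaptive marginal-gain inequality $f_{avg}(\pi^*)\le f_{avg}(\pi@\pi^*)\le f_{avg}(\pi)+\mathbb{E}\bigl[\sum_{o}\Delta(o\mid\psi_{\mathrm{end}})\bigr]$. The arithmetic $f_{avg}(\pi^*)\le(1+p/\alpha)f_{avg}(\pi)$ is right, and you correctly identify the crux: the displayed marginal-gain inequality is the nontrivial ingredient (it is the ``key lemma'' of Golovin--Krause, proved by induction on the depth of $\pi^*$ using adaptive submodularity to peel off one element at a time, and adaptive monotonicity for the first inequality).

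Two bookkeeping points you should make explicit if this were written out in full. First, the quantity $\mathbb{E}_\Phi\bigl[\sum_{o\in O_\Phi}\Delta(o\mid\psi_{\mathrm{end}}^\Phi)\bigr]$ mixes a $\Phi$-measurable index set with a $\psi_{\mathrm{end}}$-measurable summand; the form actually established in the reference is $\mathbb{E}_{\Psi}\bigl[\sum_{e}\Pr[e\in H(\pi^*,\Phi)\mid\Phi\sim\Psi]\,\Delta(e\mid\Psi)\bigr]$, which coincides with your expression by the tower property but is the version one can prove by induction. Second, the greedy comparison $\Delta(u_j\mid\psi_{j-1})\ge\alpha\,\Delta(o\mid\psi_{j-1})$ is legitimate for the realization-dependent charged element $o=\sigma_\Phi^{-1}(u_j)$ only because the greedy rule dominates \emph{every} feasible candidate at step $j$ given $\psi_{j-1}$; this is worth saying since $\sigma_\Phi$ depends on information not available to greedy at step $j$. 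Neither point is a gap --- your proposal is a faithful reconstruction of the cited proof.
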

\noindent
Therefore, for RMCB problem, we have
\begin{thm}
	The adaptive greedy policy $\pi_c$, given by Algorithm \ref{a2}, for RMCB problem is a $(1/2)$-approximate solution when $k\leq 1$ or $p_e=1$ for each $e\in E$. Hence, we have
	\begin{equation}
	f_{avg}(\pi^c)\geq(1/2)\cdot f_{avg}(\pi^*)
	\end{equation}
	where $\pi^*$ is optimal policy s.t. $|H(\pi^*,\phi)\cap C_i|\leq b_i$ for any $i\in\{1,2,...,r\}$ and all $\phi$.
\end{thm}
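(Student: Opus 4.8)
The plan is to obtain Theorem 3 as a direct instantiation of Lemma 4, once the hypotheses of that lemma have been checked for the RMCB instance in the special submodular regime. Three ingredients are needed: (i) the objective $f$ is adaptive monotone; (ii) $f$ is adaptive submodular; and (iii) the RMCB feasibility constraint is a $p$-independence system with a concrete value of $p$. Ingredient (i) is exactly Lemma 1, which transfers to RMCB because RMSB and RMCB have the identical objective function. Ingredient (ii) is Lemma 3, which supplies adaptive submodularity of $f$ precisely under the assumption $k\leq 1$ or $p_e=1$ for every $e\in E$, i.e. under the hypothesis of the present theorem.

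For ingredient (iii) I would recall from the problem definition that the community-budget constraint is captured by the partition matroid $\mathcal{M}=(V,\mathcal{I})$ with $\mathcal{I}=\{S\subseteq V:|S\cap C_i|\leq b_i\text{ for }i=1,2,\dots,r\}$. Since every matroid is a $1$-independence system — all maximal independent subsets of any set have the same cardinality, which is the $p=1$ case of the $p$-independence condition — we may take $p=1$ in Lemma 4. I would then observe that Algorithm \ref{a2} (RMCBSolver) implements the adaptive greedy policy exactly: at each iteration it forms the candidate set $I=\{v\in V\backslash H:H\cup\{v\}\in\mathcal{I}\}$ and picks $u_i\in\arg\max_{u\in I}\Delta(u|\psi)$, so it is an adaptive $\alpha$-approximate greedy policy with $\alpha=1$; moreover it halts only when $I=\emptyset$, i.e. when $H$ has become a maximal independent set of $\mathcal{M}$, which matches the termination behaviour assumed in Lemma 4.

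With these facts assembled, applying Lemma 4 with $p=1$ and $\alpha=1$ gives
\[
f_{avg}(\pi^c)\geq\frac{\alpha}{p+\alpha}\cdot f_{avg}(\pi^*)=\frac{1}{2}\cdot f_{avg}(\pi^*),
\]
where $\pi^*$ is the optimal policy subject to $|H(\pi^*,\phi)\cap C_i|\leq b_i$ for all $i$ and all $\phi$ (equivalently $H(\pi^*,\phi)\in\mathcal{I}$), which is the stated bound.

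The step I expect to demand the most care is not the final arithmetic but the bookkeeping in ingredient (iii): one must make explicit that the partition-matroid formulation genuinely matches the notion of a $p$-independence system used in \cite{golovin2011matroid}, and that the correct parameter is $p=1$ rather than, say, $p=r$ obtained by naively counting the $r$ community constraints; one must also line up the "adaptive $\alpha$-approximate greedy policy" terminology of Lemma 4 with the exact, run-until-infeasible greedy behaviour of Algorithm \ref{a2} so that taking $\alpha=1$ is legitimate. Once those identifications are spelled out, the theorem follows immediately, in complete parallel with the proof of Theorem \ref{thm} for RMSB (which used the size-budget special case $p=1$, $\alpha=1$ of the $(1-1/e)$ guarantee of \cite{golovin2011adaptive}).
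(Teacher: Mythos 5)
Your proposal is correct and follows essentially the same route as the paper's own proof: adaptive monotonicity from Lemma 1, adaptive submodularity from Lemma 3 in the special cases $k\leq 1$ or $p_e=1$, the observation that the community budget is a partition matroid and hence a $1$-independence system ($p=1$), and the exactness of the greedy selection ($\alpha=1$), all fed into Lemma 4 to obtain $\alpha/(p+\alpha)=1/2$. Your version merely spells out the bookkeeping (why $p=1$ rather than $p=r$, and why Algorithm~\ref{a2} qualifies as an exact adaptive greedy policy) in more detail than the paper does.
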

\begin{proof}
From Lemma 1, the objective function $f$ is adaptive monotone, and from Lemma 3, $f$ is adaptive submodular when $k\leq 1$ or $p_e=1$ for each $e\in E$. As we said before, the constraint of community budget can be classified to partition matroid. The independent set $\mathcal{I}$ in Lemma 4 can be defined by Equation (5). It is 1-independence system, thus, we have $p=1$. And the adaptive greedy policy is exact, which means that $\alpha=1$. According to Lemma 4, Equation (10), we have $\alpha/(p+\alpha)=1/2$ finally.
\end{proof}

\subsection{Non-Submodularity}
In addition to the special cases we discuss in the last subsection, the objective function of RMSB problem is not adaptive submodular, thus, the approximation ratio in Theorem 2 cannot be held in general cases. In order to deal with the general cases that is not adaptive submodular, we get help from the concept of adaptive primal curvature, proposed by \cite{smith2018approximately}, to obtain the approximation bound for adaptive greedy policy without adaptive submodularity. Wang et al. \cite{wang2016approximation} were the first to propose the concept of elemental curvature, which is the maximum ratio of the marginal gain of an element $i$ at any subset $S$ and $S\cup\{j\}$ for any element $j$. Extended to adaptive case, adaptive primal curvature \cite{smith2018approximately} was formulated, which is the ratio of the marginal gain of element $i$ under partial realization $\psi$ and $\psi\cup\{s\}$, where $s$ is possible state for any element $j$. Thus, we have
\begin{defn}[Adaptive Primal Curvature \cite{smith2018approximately}]
Given a adaptive momotone function $f$, the adaptive primal curvature is
\begin{equation}
\nabla_f(i,j|\psi)=\mathbb{E}\left[\frac{\Delta(i|\psi\cup s)}{\Delta(i|\psi)}\bigg|s\in S(j)\right]
\end{equation}
where $S(j)$ is the state set of $j$ and $\Delta$ is conditional expected marginal benefit, defined as Definition 2.
\end{defn}

From Equation (12), the $\nabla_f(i,j|\psi)$ measures the changing of the conditional expected marginal benefit because of $j$ added to the solution previously. Under the background of RMSB and RMCB problem, the meaning of $\Delta(i|\psi\cup s)$ is a little different. Here, $i$ and $j$ represent the users that are invited to be initiators. Let us look at the term: $\psi\cup s$, $s(j)$ can be $0$ or $1$. When $s(j)=0$, $\psi\cup s$ means we only add state $s(j)=0$ to $\psi$. But when $s(j)=1$, the meaning of $\psi\cup s$ will be more complicated, where we not only add the state $s(j)=1$ to $\psi$, but also all state change of edges since $j$ becomes an initiator should be added to $\psi$. Thus, the expectation of primal curvature is with respect to the acceptance probability of $i$ and $j$, and edge probabilities. In the non-adaptive case, $s(j)=1$ is determined, the elemental curvature is the maximum primal curvature. To measure the total change form partial realization $\psi$ to $\psi'$, total primal curvature \cite{smith2018approximately} is defined as follows:
\begin{defn}[Adaptive Total Primal Curvature \cite{smith2018approximately}]
Let $\psi\subset\psi'$ and let $\psi\rightarrow\psi'$ represents the set of possible state sequences changing from partial realization $\psi$ to $\psi'$. Then the adaptive total primal curvature is
\begin{equation*}
\Gamma(i|\psi',\psi)=\mathbb{E}\left[\prod_{s_j\in Q}\nabla'\left(i,s_j|\psi\cup[s_{j-1}]\right)\Bigg|Q\in\psi\rightarrow\psi'\right]
\end{equation*}
where $[s_{j-1}]=\{s_1,...,s_{j-1}\}$ and
\begin{equation*}
\nabla'(i,s_j|\psi)=\frac{\Delta(i|\psi\cup s_j)}{\Delta(i|\psi)}
\end{equation*}
which is a simplified notation of the adaptive primal curvature corresponded to a single state $s_j\in S(j)$.
\end{defn}
In the following, we can find a constant upper bound of the adaptive total primal curvature for the RMSB problem and a relationship between any policy with budget $b$ and adaptive greedy policy with any budget. Then, according to that, the approximation ratio of general RMSB problem can be found later.
\begin{lem}
The adaptive total primal curvature $\Gamma(i|\psi',\psi)$ for any $i$, $\psi$ and $\psi'$ is upper bounded by $\delta$, $\Gamma(i|\psi',\psi)\leq\delta$ in the general cases for RMSB problem, where
\begin{equation}
\delta=\frac{R_0+(n-1)\cdot R_1}{R_0-R_1}
\end{equation}
\end{lem}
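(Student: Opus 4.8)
The plan is to reduce $\Gamma(i|\psi',\psi)$ to a single ratio of conditional expected marginal benefits and then bound its numerator and denominator separately using only the revenue vector. First observe that, by the definition of $\nabla'$, $\nabla'(i,s_j|\psi\cup[s_{j-1}])=\Delta(i|\psi\cup[s_j])/\Delta(i|\psi\cup[s_{j-1}])$, so along any admissible state sequence $Q=(s_1,\dots,s_m)$ with $\psi\cup[s_m]=\psi'$ the product in the definition of the adaptive total primal curvature telescopes,
\begin{equation*}
\prod_{s_j\in Q}\nabla'(i,s_j|\psi\cup[s_{j-1}])=\frac{\Delta(i|\psi')}{\Delta(i|\psi)},
\end{equation*}
independently of the order in which the states are revealed (and of the bookkeeping that, when a user state is set to $1$, the induced edge-state changes are appended). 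Hence $\Gamma(i|\psi',\psi)=\Delta(i|\psi')/\Delta(i|\psi)$, and it suffices to show $\Delta(i|\psi')/\Delta(i|\psi)\le\delta$ for every $\psi\subseteq\psi'$ and every candidate $i\notin dx(\psi')$.

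Second, I would lower-bound the denominator. Since $i\notin dx(\psi')\supseteq dx(\psi)$, the acceptance state $\Phi(i)$ is unobserved under both $\psi$ and $\psi'$ and is independent of all other states; hence $\Delta(i|\psi)$ equals $\Pr[\Phi(i)=1]$ times the expected revenue gain of adding $i$ as an initiator, conditioned on $i$ accepting and on $\Phi\sim\psi$, and the identical constant $\Pr[\Phi(i)=1]$ multiplies $\Delta(i|\psi')$. Conditioned on $i$ accepting, $i$ becomes a $0$-hop participant --- by the Tendency Assumption, since $0$ does not exceed the hop distance from $i$ to any other initiator --- so its contribution rises to $R_0$, whereas before inviting $i$ its contribution was at most $R_1$: $i$ was either not a participant (value $0$) or a $j$-hop participant with $j\ge1$ (value $R_j\le R_1$), as $i\notin dx(\psi)$ cannot already be an initiator. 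Adding an initiator only offers every other user a weakly closer initiator, so no other user's contribution decreases (this is also why $f$ is adaptive monotone, Lemma~1); therefore the gain is at least $R_0-R_1$, and $\Delta(i|\psi)\ge\Pr[\Phi(i)=1]\cdot(R_0-R_1)$.

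Third, I would upper-bound the numerator by the crudest estimate: in any realization consistent with $\psi'$ and conditioned on $i$ accepting, $i$'s own contribution increases by at most $R_0$, while each of the other $n-1$ users gains at most $R_1$, since the best a non-initiator can do is become a $1$-hop participant. Thus $\Delta(i|\psi')\le\Pr[\Phi(i)=1]\cdot\big(R_0+(n-1)R_1\big)$. Dividing the two bounds, the common factor $\Pr[\Phi(i)=1]$ cancels and
\begin{equation*}
\Gamma(i|\psi',\psi)=\frac{\Delta(i|\psi')}{\Delta(i|\psi)}\le\frac{R_0+(n-1)R_1}{R_0-R_1}=\delta,
\end{equation*}
which is the claim; this implicitly uses $R_0>R_1$ so that $\delta$ is finite and $\Delta(i|\psi)>0$.

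The step I expect to need the most care is the first one --- making precise what ``the set of possible state sequences $\psi\rightarrow\psi'$'' means and checking that every such sequence stays within consistent partial realizations, so that the product genuinely telescopes rather than being order-dependent. If one instead treats $\psi'$ as random (reached gradually by a policy), the same argument gives $\Gamma(i|\psi',\psi)\le\sup_{\psi\subseteq\psi'}\Delta(i|\psi')/\Delta(i|\psi)\le\delta$, so the conclusion is unchanged. Everything past the reduction is a short pair of estimates, and the only place the $k$-hop Collaborate Game structure really enters is the denominator bound: a freshly invited user jumps to hop $0$ and therefore contributes at least $R_0-R_1$ no matter what was observed before.
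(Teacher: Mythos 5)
Your proposal is correct and follows essentially the same route as the paper: telescope the product to $\Delta(i|\psi')/\Delta(i|\psi)$, cancel the acceptance probability $\theta_i$, upper-bound the numerator by $R_0+(n-1)R_1$ (initiator plus all others as potential 1-hop participants), and lower-bound the denominator by $R_0-R_1$ (a newly accepted initiator jumps to hop $0$ from a role worth at most $R_1$, with no other contribution decreasing). Your write-up is in fact slightly more careful than the paper's, e.g.\ in making the cancellation of $\theta_i$ and the implicit assumption $R_0>R_1$ explicit.
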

\begin{proof}
The adaptive total primal curvature $\Gamma(i|\psi',\psi)$ can be denoted by $\Gamma(i|\psi',\psi)=\Delta(i|\psi')/\Delta(i|\psi)$. Let us assume sequence $\{s_1,s_2,...,s_l\}\in\phi\rightarrow\phi'$, we have
\begin{equation*}
\frac{\Delta(i|\psi\cup\{s_1\})}{\Delta(i|\psi)}\cdot\frac{\Delta(i|\psi\cup\{s_1,s_2\})}{\Delta(i|\psi\cup\{s_1\})}\cdot\cdot\cdot\frac{\Delta(i|\psi')}{\Delta(i|\psi\cup[s_{l-1}])}
\end{equation*}
From above, the product for any sequence from partial realization $\psi$ to $\psi'$ can be reduced to $\Delta(i|\psi')/\Delta(i|\psi)$ trivially. Then, we have
\begin{equation}
\Gamma(i|\psi',\psi)\leq\frac{\theta_i\cdot\max_{\psi}\Delta(i|\psi,s(i)=1)}{\theta_i\cdot\min_{\psi}\Delta(i|\psi,s(i)=1)}
\end{equation}
where $\Delta(i|\psi,s(i)=1)$ is the marginal gain assuming that user $i$ accepts to be the initiator. For $\max_{\psi}\Delta(i|\psi,s(i)=1)$, the revenue from user $i$ is at most $R_0$ and from each of user in $V\backslash\{i\}$ is at most $R_1$. Considering an extreme example, there exists an edge from user $i$ to each of user $j\in V\backslash\{i\}$ and the edge probability is equal to $1$ for each edge. Thus, the total revenue because of selecting $i$ as initiator could be at most $R_0+(n-1)\cdot R_1$. For $\min_{\psi}\Delta(i|\psi,s(i)=1)$, we need to know the current participation role of user $i$, maybe user $i$ is a x-hop participant, $x\in\{1,2,...,k\}$ before selecting $i$ as initiator. The revenue from user $i$ is equal to $R_0-R_x$. Based on Benefit diminishing assumption, $R_1\geq R_2\geq...\geq R_k$, the revenue from user $i$ is at least $R_0-R_1$. Considering an extreme example, all neighbors of user $i$ have been 1-hop participants before selecting $i$ as initiator, there is no revenue for other users. Thus, the total revenue because of selecting $i$ as initiator could be at most $R_0-R_1$. Combined with the above analysis, the Lemma is held.
\end{proof}

Let us denote by $\pi_b$ any policy that inviting exact $b$ users, and $\pi_l^g$ by the adaptive greedy policy that inviting $l$ users. Besides, we define $g_l$ as the $l^{th}$ user who is invited to be an initiator in adaptive greedy policy. Then, we can bound the adaptive greedy policy that inviting $l$ users with any policy that inviting exact $b$ users.
\begin{lem}
The difference of expected revenue bewtween any policy $\pi_b$ and adaptive greedy policy $\pi_l^g$ in the general cases for RMSB problem can be bounded by
\begin{equation}
f_{avg}(\pi_b)-f_{avg}(\pi_l^g)\leq b\delta\cdot\Delta_{l+1}
\end{equation}
where we assume $l<b$ and $\Delta_{l+1}$ is defined as $\Delta_{l+1}=f_{avg}(\pi_{l+1}^g)-f_{avg}(\pi_l^g)$.
\end{lem}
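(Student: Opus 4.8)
The plan is to mirror Golovin and Krause's analysis of the adaptive greedy policy, inserting the curvature factor $\delta$ from Lemma 6 at the single place where adaptive submodularity would otherwise be invoked. Write $\pi_l^g @ \pi_b$ for the policy that first runs the adaptive greedy policy for $l$ invitations and then runs $\pi_b$ starting from the resulting partial realization. Since this policy makes every invitation $\pi_b$ makes (plus $l$ more) and $f$ is adaptive monotone (Lemma 1), the concatenation inequality of \cite{golovin2011adaptive} gives $f_{avg}(\pi_b)\le f_{avg}(\pi_l^g @ \pi_b)$, so it suffices to bound the extra expected revenue $f_{avg}(\pi_l^g @ \pi_b)-f_{avg}(\pi_l^g)$ contributed by the $\pi_b$ phase.

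Next I would condition on the (random) partial realization $\psi$ produced by the first $l$ greedy steps. Running $\pi_b$ from $\psi$ makes at most $b$ further invitations; along a trajectory write $u_1,\dots,u_b$ for the invited users and $\psi=\Psi_0\subseteq\Psi_1\subseteq\cdots\subseteq\Psi_b$ for the partial realizations encountered. Telescoping the definition of the conditional expected marginal benefit gives $\mathbb{E}[f(dx(\psi)\cup\{u_1,\dots,u_b\},\Phi)-f(dx(\psi),\Phi)\mid\Phi\sim\psi]=\sum_{j=1}^{b}\mathbb{E}[\Delta(u_j\mid\Psi_{j-1})\mid\Phi\sim\psi]$, where crucially each $\Psi_{j-1}$ is a superrealization of $\psi$. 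By the collapsing-product identity noted in the proof of Lemma 6 together with Lemma 6 itself, $\Delta(u_j\mid\Psi_{j-1})=\Gamma(u_j\mid\Psi_{j-1},\psi)\cdot\Delta(u_j\mid\psi)\le\delta\cdot\Delta(u_j\mid\psi)$, and since the $(l+1)$-st greedy pick $g_{l+1}$ maximizes $\Delta(\cdot\mid\psi)$ over all remaining users, $\Delta(u_j\mid\psi)\le\Delta(g_{l+1}\mid\psi)$. Summing the $b$ terms yields $\mathbb{E}[\,\cdot\mid\Phi\sim\psi\,]\le b\delta\,\Delta(g_{l+1}\mid\psi)$.

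Finally I would take the expectation over the random partial realization $\psi=\Psi_l^g$ generated by the greedy policy. Recalling $\Delta_{l+1}=f_{avg}(\pi_{l+1}^g)-f_{avg}(\pi_l^g)=\mathbb{E}_{\Psi_l^g}[\Delta(g_{l+1}\mid\Psi_l^g)]$, this gives $f_{avg}(\pi_l^g @ \pi_b)-f_{avg}(\pi_l^g)\le b\delta\,\Delta_{l+1}$, and combining with the first paragraph completes the argument.

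The main obstacle is the bookkeeping in the middle paragraph: one must set up the concatenated policy $\pi_l^g @ \pi_b$ carefully so that the trajectory-wise telescoping is valid, so that every intermediate $\Psi_{j-1}$ is genuinely an extension of $\psi$ (so Lemma 6 applies), and so that the dependence of the invited users $u_j$ on the observed states is handled correctly, since $u_j$ is itself random. This is precisely the care taken in \cite{golovin2011adaptive}; the only genuinely new ingredient is replacing the submodular inequality $\Delta(u_j\mid\Psi_{j-1})\le\Delta(u_j\mid\psi)$ by $\Delta(u_j\mid\Psi_{j-1})\le\delta\,\Delta(u_j\mid\psi)$, which is exactly what Lemma 6 buys us.
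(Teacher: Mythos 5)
Your proposal is correct and follows essentially the same route as the paper: bound $f_{avg}(\pi_b)$ by $f_{avg}(\pi_l^g@\pi_b)$ via adaptive monotonicity, then control each of the $b$ additional marginal gains by the total primal curvature bound $\Delta(i|\psi')\leq\delta\,\Delta(i|\psi)\leq\delta\,\Delta(g_{l+1}|\psi)$ and take expectations. Your middle paragraph simply spells out the telescoping that the paper leaves implicit (note the curvature bound you invoke is the paper's Lemma~5, not~6).
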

\begin{proof}
We know $f_{avg}(\pi_b)\leq f_{avg}(\pi_l^g@\pi_b)$ because $f_{avg}(\cdot)$ is adaptive monotone \cite{golovin2011adaptive}. From it, we have
\begin{equation*}
f_{avg}(\pi_b)-f_{avg}(\pi_l^g)\leq f_{avg}(\pi_l^g@\pi_b)-f_{avg}(\pi_l^g)
\end{equation*}
The difference of expected revenue between $\pi_b$ and $\pi_l^g$ can be bounded by running $\pi_b$ after running $\pi_l^g$. It means that we need to send $b$ invitations again. Supposing that $\psi$ is the partial realization generated by $\pi_l^g$, the initiators selected by $\pi_b$ is on partial realization $\psi'$ such that $\psi\subset\psi'$. Then, we have $\Delta(i|\psi')\leq\delta\Delta(g_{l+1}|\psi)$ if $\psi\subset\psi'$ and $i\notin dx(\psi)$, because $\Delta(i|\psi')=\Gamma(i|\psi',\psi)\cdot \Delta(i|\psi)\leq\delta\Delta(g_l|\psi)$ from Lemma 5. Therefore,

\begin{flalign}
&f_{avg}(\pi_b)-f_{avg}(\pi_l^g)\nonumber\\
&\leq b\delta\cdot\mathbb{E}[\Delta(g_{l+1}|\psi)|\psi]\nonumber\\
&=b\delta\cdot\mathbb{E}[\mathbb{E}[f(dx(\psi)\cup\{g_{l+1}\},\Phi)-f(dx(\psi),\Psi)|\Phi\sim\psi]|\psi]\nonumber\\
&=b\delta\cdot\mathbb{E}[f(H(\pi_{l+1}^g,\Phi),\Phi)-f(H(\pi_{l}^g,\Phi),\Phi)]\nonumber\\
&=b\delta\cdot f_{avg}(\pi_{l+1}^g)-f_{avg}(\pi_l^g)\nonumber\\
&=b\delta\cdot\Delta_{l+1}\nonumber
\end{flalign}
where the first inequality is from seeding $b$ invitations again, and other equality from Definition 2 and Equation (4).
\end{proof}
So far, we can obtain the main theorem in this paper that gets the approximation of RMSB problem in the general cases that is not adaptive submodular.
\begin{thm}
The approximation performance of Adaptive greedy policy $\pi_b^g$, given by Algorithm \ref{a1}, in the general cases for RMSB problem satisfies the following:
\begin{equation}
f_{avg}(\pi_b^g)\geq\left(1-e^{-\frac{1}{\delta}}\right)\cdot f_{avg}(\pi_b)
\end{equation}
when we hypothesize $\Delta_{b+1}\leq(1-1/b\delta)^b\cdot\Delta_1$.
\end{thm}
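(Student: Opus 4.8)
The plan is the standard adaptive-greedy telescoping argument of Golovin and Krause, but with Lemma 6 taking over the role played by adaptive submodularity in the classical proof. Write $\mathrm{OPT}=f_{avg}(\pi_b)$ and $a_l=f_{avg}(\pi_l^g)$, and let $g_l=\mathrm{OPT}-a_l$ be the residual gap after $l$ greedy invitations. Since a policy that invites nobody collects no revenue, $a_0=0$, hence $g_0=\mathrm{OPT}$; and by definition $\Delta_{l+1}=a_{l+1}-a_l=g_l-g_{l+1}$. Lemma 6, applied at a generic $l$ with $0\le l\le b-1$, reads $g_l\le b\delta\,\Delta_{l+1}=b\delta\,(g_l-g_{l+1})$, which rearranges to the one-step contraction
\begin{equation*}
g_{l+1}\le\Bigl(1-\frac{1}{b\delta}\Bigr)g_l ,\qquad 0\le l\le b-1 .
\end{equation*}

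First I would iterate this contraction $b$ times starting from $l=0$, obtaining $g_b\le\bigl(1-\frac{1}{b\delta}\bigr)^b g_0=\bigl(1-\frac{1}{b\delta}\bigr)^b\mathrm{OPT}$, and then apply $1-x\le e^{-x}$ with $x=1/(b\delta)$ to get $\bigl(1-\frac{1}{b\delta}\bigr)^b\le e^{-1/\delta}$, so that $g_b=\mathrm{OPT}-f_{avg}(\pi_b^g)\le e^{-1/\delta}\,\mathrm{OPT}$, which is exactly the claimed inequality. Two bookkeeping points need to be checked along the way: that the optimal policy $\pi^*$, whose constraint is $|H(\pi^*,\phi)|\le b$, can by adaptive monotonicity be padded to a policy inviting exactly $b$ users so that Lemma 6 applies with $\mathrm{OPT}=f_{avg}(\pi^*)$; and that the restriction $l\le b-1$ in Lemma 6 is precisely the reason the iterated product carries exponent $b$ and the final rate is $e^{-1/\delta}$ and no better (also, $g_l\ge 0$ is not actually needed — if ever $a_l>\mathrm{OPT}$ the inequality only becomes easier).

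The place where the absence of adaptive submodularity bites, and where the hypothesis $\Delta_{b+1}\le(1-1/b\delta)^b\Delta_1$ is meant to be spent, is the following: in the adaptive-submodular case the greedy increments $\Delta_l$ are automatically non-increasing, which is what makes the iterated contraction land cleanly, whereas here one must separately certify that the greedy gains still decay at the geometric rate through the end of the budget, and the stated hypothesis is exactly such a certificate — it keeps the telescoped estimate for $g_b$ consistent with the one-step bound $\mathrm{OPT}=g_0\le b\delta\,\Delta_1$ coming from Lemma 6 at $l=0$. I expect the main obstacle to be making this last point fully precise, i.e., pinning down which line of the telescoping genuinely fails without the hypothesis, since the contraction displayed above is otherwise self-contained; the honest reading is that the hypothesis serves as the non-submodular substitute for the monotone-marginal-gains property used in the classical analysis.
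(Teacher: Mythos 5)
Your argument is correct, and it reaches the paper's intermediate bound $f_{avg}(\pi_b^g)\geq\bigl[1-\bigl(1-\frac{1}{b\delta}\bigr)^b\bigr]f_{avg}(\pi_b)$ by a genuinely different and cleaner route. The paper does not iterate a one-step contraction on the residual gap; instead it multiplies the per-step bound $f_{avg}(\pi_b)\leq f_{avg}(\pi_l^g)+b\delta\Delta_{l+1}$ by the geometric weights $\bigl(1-\frac{1}{b\delta}\bigr)^{b-l}$, sums over $l=1,\dots,b$, and rearranges coefficients of the $\Delta_l$; that manipulation leaves a boundary term $b\delta\bigl[\Delta_{b+1}-\bigl(1-\frac{1}{b\delta}\bigr)^b\Delta_1\bigr]$, and the hypothesis $\Delta_{b+1}\leq\bigl(1-\frac{1}{b\delta}\bigr)^b\Delta_1$ is invoked solely to make that term nonpositive. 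Your recursion $g_{l+1}\leq\bigl(1-\frac{1}{b\delta}\bigr)g_l$ for $l=0,\dots,b-1$ is an algebraically equivalent use of the same Lemma 6 but avoids generating any boundary term, so it proves the bound \emph{without} the extra hypothesis. This resolves the puzzle you flag at the end: no line of your telescoping fails without the hypothesis; the hypothesis is an artifact of the paper's particular weighted-summation route rather than a necessary substitute for adaptive submodularity (your derivation, like the paper's, already relies on Lemma 6, which is where non-submodularity is actually paid for via $\delta$). Two small points worth making explicit: the chaining of $g_{l+1}\leq\bigl(1-\frac{1}{b\delta}\bigr)g_l$ requires the factor to be nonnegative, i.e.\ $b\delta\geq 1$, which holds since $\delta\geq 1$ by Lemma 5; and since the theorem's right-hand side is $f_{avg}(\pi_b)$ for $\pi_b$ a policy inviting exactly $b$ users (the notation fixed before Lemma 6), your padding remark about $\pi^*$ is harmless but not strictly needed. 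Finally, your observation that $\bigl(1-\frac{1}{b\delta}\bigr)^b\leq e^{-1/\delta}$ for every finite $b$ is also tidier than the paper's appeal to $b\to\infty$.
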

\begin{proof}
According to the Lemma 6, $f_{avg}(\pi_b)-f_{avg}(\pi_l^g)$ can be bounded by $b\delta\cdot\Delta_{l+1}$, we have
\begin{flalign}
f_{avg}(\pi_b)&\leq f_{avg}(\pi_l^g)+b\delta\cdot\Delta_{l+1}\\
&=\sum_{i=1}^{l}\Delta_l+b\delta\cdot\Delta_{l+1}
\end{flalign}
where $f_{avg}(\pi_l^g)=\sum_{i=1}^{l}\Delta_l$ and $\Delta_{l}=f_{avg}(\pi_{l}^g)-f_{avg}(\pi_{l-1}^g)$. We note that $f_{avg}(\pi_0^g)=0$. Multiply both side by $(1-(b\delta)^{-1})^{b-l}$ of Inequality (17) and sum from $l=1$ to $b$. The left hand side of Inequality (17) can be reduced to
\begin{equation*}
\sum_{l=1}^{b}\left(1-\frac{1}{b\delta}\right)^{b-l}\cdot f_{avg}(\pi_b)=b\delta\left[1-\left(1-\frac{1}{b\delta}\right)^b\right]\cdot f_{avg}(\pi_b)
\end{equation*}
Similarly, the right hand side of Inequality (17) can be written as follows:
\begin{equation}
\sum_{l=1}^{b}\left[f_{avg}(\pi_l^g)+b\delta\cdot\Delta_{l+1}\right]\left(1-\frac{1}{b\delta}\right)^{b-l}
\end{equation}
Then, (19) can be reduced to
\begin{flalign}
&=\sum_{l=1}^{b}\left[\sum_{i=1}^{l}\Delta_l+b\delta\cdot\Delta_{l+1}\right]\left(1-\frac{1}{b\delta}\right)^{b-l}\nonumber\\
&=\sum_{l=1}^{b}\left(1-\frac{1}{b\delta}\right)^{b-l}\cdot\sum_{i=1}^{l}\Delta_l+b\delta\sum_{l=1}^{b}\left(1-\frac{1}{b\delta}\right)^{b-l}\cdot\Delta_{l+1}\nonumber\\
&=\sum_{l=1}^{b}\left(1-\frac{1}{b\delta}\right)^{b-l}\cdot\sum_{i=1}^{l}\Delta_l+b\delta\sum_{l=2}^{b+1}\left(1-\frac{1}{b\delta}\right)^{b-l+1}\cdot\Delta_l\nonumber
\end{flalign}
Rearrange and separate the term $\Delta_l$, then combine its coefficients, we have
\begin{flalign}
&=\sum_{l=1}^{b}\left[\sum_{j=l}^b\left(1-\frac{1}{b\delta}\right)^{b-j}+b\delta\left(1-\frac{1}{b\delta}\right)^{b-l+1}\right]\cdot\Delta_l\nonumber\\
&+b\delta\left[\left(1-\frac{1}{b\delta}\right)^{b-(b+1)+1}\cdot\Delta_{b+1}-\left(1-\frac{1}{b\delta}\right)^{b-1+1}\cdot\Delta_{1}\right]\nonumber
\end{flalign}
Here, if we assume
\begin{equation}
\Delta_{b+1}\leq\left(1-\frac{1}{b\delta}\right)^b\cdot\Delta_1
\end{equation}
Then, we have
\begin{flalign}
&\leq\sum_{l=1}^{b}\left[\sum_{j=l}^b\left(1-\frac{1}{b\delta}\right)^{b-j}+b\delta\left(1-\frac{1}{b\delta}\right)^{b-l+1}\right]\cdot\Delta_l\nonumber\\
&=\sum_{l=1}^{b}b\delta\left[\left(1-\left(1-\frac{1}{b\delta}\right)\right)^{b-l+1}+\left(1-\frac{1}{b\delta}\right)^{b-l+1}\right]\cdot\Delta_l\nonumber\\
&=\sum_{l=1}^{b}b\delta\cdot\Delta_l=b\delta\cdot f_{avg}(\pi_b^g)\nonumber
\end{flalign}
Therefore, combining with above, we have
\begin{equation}
f_{avg}(\pi_b^g)\geq\left[1-\left(1-\frac{1}{b\delta}\right)^b\right]\cdot f_{avg}(\pi_b)
\end{equation}
Then, we can obtain the conclusion, Inequality (16), when $b\rightarrow\infty$ from Inequality (21).
\end{proof}

\subsection{Further Estimation}
From the Lemma 5 in the last subsection, it gives us an upper bound $\delta$ of $\Gamma(i|\psi',\psi)$ for any $i$, $\psi$ and $\psi'$. However, this estimation is too rough to get a valid approximation ratio. A natural question is whether we can obtain a tight upper bound. For $\max_{\psi}\Delta(i|\psi,s(i)=1)$, there exists a more precise solution which depends on the structure of network dataset. Given a user $i\in V$, all the potential following participants can be found by breath-first searching, here we assume the edge probability in network is equal to $1$. Then, we can determine the hop of these followers from this initiator $i$. The maximum revenue from user $i$ and her following participants can be known. Finally, we choose the largest one among all users as the upper bound of $\max_{\psi}\Delta(i|\psi,s(i)=1)$. Given targeted network $G=(V,E)$ and revenue vector $\vec{R}$, the calculation steps are as follows:
\begin{enumerate}
	\item For each user $i\in V$, we find all potential followers by breath-first searching, defined as $N_i=\{S_0,S_1,...,S_k\}$, where $S_0=\{i\}$ and $S_j$, $j\in\{1,2,...,k\}$, is the set of potential j-hop participants to user $i$. It means that the length of the shortest path from a user in $S_j$ to user $i$ is equal to $j$.
	\item For each user $i\in V$, we compute the maximum possible revenue from $i$ and her followers, which can be expressed as $P_i$, where
	\begin{equation}
	P_i=\sum_{j=0}^{k}R_i\cdot|S_i|
	\end{equation}
	\item Finally, we select the largest $P_{max}=\max_{i\in V}\{P_i\}$ as the upper bound of $\max_{\psi}\Delta(i|\psi,s(i)=1)$.
\end{enumerate}

Thus, from above process, we can get a estimated vector $\vec{P}=(P_1,P_2,...,P_n)$, which can be computed directly according to the graph dataset. Then, we have
\begin{lem}
	The adaptive total primal curvature $\Gamma(i|\psi',\psi)$ for any $i$, $\psi$ and $\psi'$ is upper bounded by $\delta$, $\Gamma(i|\psi',\psi)\leq\delta$ in the general cases for RMSB problem, where
	\begin{equation}
	\delta=\max\{P_i:P_i\in\vec{P}\}\cdot(R_0-R_1)^{-1}
	\end{equation}
\end{lem}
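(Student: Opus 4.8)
The plan is to mirror the proof of Lemma 5 step for step, keeping the lower bound on the denominator untouched and replacing only the crude numerator bound $R_0+(n-1)R_1$ by the data-dependent quantity $P_{max}=\max_{i\in V}P_i$, where $P_i=\sum_{j=0}^{k}R_j\,|S_j|$ is built from the BFS layers $\{S_0,\dots,S_k\}$ of the $k$-ball around $i$. First I would invoke the telescoping identity already used in Lemma 5: for any admissible state sequence from $\psi$ to $\psi'$, the product defining $\Gamma(i|\psi',\psi)$ collapses to $\Delta(i|\psi')/\Delta(i|\psi)$, so it suffices to bound this ratio uniformly over consistent pairs $\psi\subseteq\psi'$ with $i\notin dx(\psi')$. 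Factoring the common $\theta_i$ out of numerator and denominator reduces the task to bounding $\max_\psi\Delta(i|\psi,s(i)=1)$ from above and $\min_\psi\Delta(i|\psi,s(i)=1)$ from below, where $\Delta(i|\psi,s(i)=1)$ denotes the marginal revenue conditioned on user $i$ accepting to be an initiator.

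For the denominator I would reuse verbatim the reasoning of Lemma 5: under the Tendency Assumption, the only revenue guaranteed when $i$ becomes an initiator is the upgrade of $i$ herself from an $x$-hop participant ($x\in\{1,\dots,k\}$) to a $0$-hop participant, contributing $R_0-R_x\geq R_0-R_1$ by Benefit Diminishing; this is tight in the realization where every neighbour of $i$ is already a $1$-hop participant, so $\min_\psi\Delta(i|\psi,s(i)=1)\geq R_0-R_1$.

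The substantive new step is the numerator. I would argue that, for a fixed initiator $i$ and over all realizations and partial realizations $\psi$, the only users whose participation role can be created or improved by $i$'s invitation lie in the ball of radius $k$ around $i$ in $G$: influence from an initiator propagates at most $k$ hops, and by the Tendency Assumption a user's recorded hop is the minimum over all initiators, so adding $i$ can decrease a user's effective hop only if $i$ reaches that user within $k$ hops. Next, a user $w$ at graph-distance $j\le k$ from $i$ that was an $h$-hop participant (with $h\ge j$, allowing $h=\infty$ and $R_\infty:=0$) becomes at best a $j$-hop participant, so its increment is $R_j-R_h\le R_j$ since $R_0\ge R_1\ge\cdots\ge R_k$. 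Summing these per-user increments over the layers $S_0,\dots,S_k$, and noting the extremal case is all edge probabilities equal to $1$ so that every user of $S_j$ is actually reached at hop $j$, yields $\Delta(i|\psi,s(i)=1)\le\sum_{j=0}^{k}R_j\,|S_j|=P_i$. Taking the maximum over $i\in V$ gives $\max_\psi\Delta(i|\psi,s(i)=1)\le P_{max}$, and combining with the denominator bound gives $\Gamma(i|\psi',\psi)\le P_{max}/(R_0-R_1)=\delta$, as claimed.

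The main obstacle I anticipate is making this numerator argument fully rigorous, namely the bookkeeping that (i) inviting $i$ generates no revenue from any user outside the $k$-ball of $i$, and (ii) every ``upgrade'' of an already-participating user strictly lowers the index of its revenue term, so the per-user gain attributable to $i$ alone is dominated by $R_j$ with $j$ the BFS-distance. Both points rest on the $k$-hop truncation of the cascade together with the Tendency Assumption; once this monotonicity-in-hop argument is in place, the remainder is the same $\theta_i$-cancellation and telescoping as in Lemma 5.
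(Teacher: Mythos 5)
Your proposal is correct and follows essentially the same route the paper intends: the paper states this lemma without a formal proof, relying on the preceding construction of $P_i$ together with the argument of Lemma 5, and your write-up simply substitutes the data-dependent numerator bound $P_{max}=\max_{i\in V}P_i$ for the crude $R_0+(n-1)R_1$ while keeping the $R_0-R_1$ denominator, the $\theta_i$-cancellation, and the telescoping of $\Gamma(i|\psi',\psi)$ to $\Delta(i|\psi')/\Delta(i|\psi)$ unchanged. Your per-user accounting (no revenue outside the $k$-ball; a user at BFS-distance $j$ gains at most $R_j$) is in fact more explicit than anything the paper provides.
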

\noindent
The value of $\delta$ is data-dependent. When $k$ is smaller and the degree distribution is more uniform, the upper bound $\delta$ is more tight. It is in line with our intuition.

\section{Experiment}
In this section, we conduct several experiments to validate the correctness and efficiency of our proposed algorithms on real datasets. It evaluates Algorithm \ref{a1} and \ref{a2} with some common used baseline algorithms.

\subsection{Dataset Description and Statistics}
Our experiments are relied on the datasets from networkrepository.com \cite{nr}, an online network repository. There are three datasets used in our experiments: (1) Dataset-1: A co-authorship network, where each edge is a co-authorship among scientists to publish papers in the area of network theory. (2) Dataset-2: A Wiki network, which is a who-votes-on-whom network collected from Wikipedia. (3) A Collaboration network extracted from Arxiv General Relativity. The statistics information of the three datasets is represented in table \ref{table_1}.

\begin{table}[h]
	\renewcommand{\arraystretch}{1.3}
	\caption{The statistics of three datasets}
	\label{table_1}
	\centering
	\begin{tabular}{|c|c|c|c|c|}
		\hline
		\bfseries Dataset & \bfseries n & \bfseries m & \bfseries Type & \bfseries Average degree\\
		\hline
		Dataset-1 & 0.4K & 1.01K & undirected & 4\\
		\hline
		Dataset-2 & 1.0K & 3.15K & undirected & 6\\
		\hline
		Dataset-3 & 5.2K & 14.5K & undirected & 5\\
		\hline
	\end{tabular}
\end{table}

\subsection{Experimental Settings}
As mentioned earlier, our proposed algorithms are based on the following parameters: Hop number $k$ (at most k-
hop participants follow an initiator), Acceptance vector $\vec{\theta}$, Revenue vector $\vec{R}$, budget $b(\vec{b})$ and edge probability. There are two experiments we perform to test Algorithm \ref{a1} and \ref{a2} used to solve RMSB and RMCB problem, called submodualr performance and non-submodular performance. Shown as Lemma 3, the objective function of RMSB and RMCB is adaptive submodular when $k\leq 1$ or $p_e=1$ for all $e\in E$. Thus, for submodular performance, we set (1) edge probability $p_e=0.5$ for all $e\in E$ and $k=1$; (2) edge probability $p_e=1$ for all $e\in E$ and $k=2$. And in Algorithm \ref{a1} and \ref{a2}, we estimate the Conditional Expected Marginal Benefit $\Delta(u|\phi)$ for each $u\in V$ by the simplifed method of Section 4.2 in \cite{guo2019khop}. For non-submodular performance, we set (1) edge probability $p_e=0.5$ for all $e\in E$ and $k=2$; (2) edge probability  $p_e=0.5$ for all $e\in E$ and $k=3$. We conduct these two experiments on the three datasets shown as above, and analyze their experimental results.

\begin{figure}[!t]
	\centering
	\includegraphics[width=3.5in]{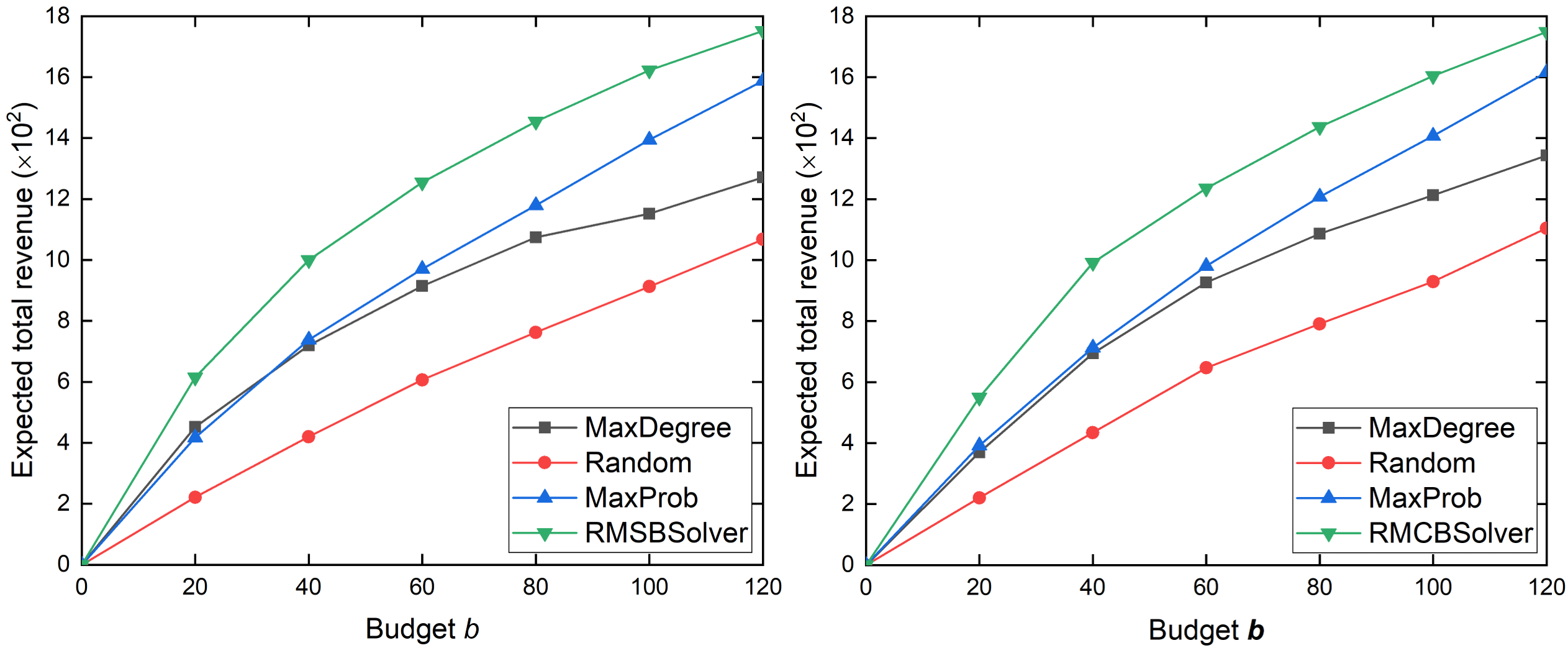}
	\\(a) Dataset-1
	\\${}$
	\includegraphics[width=3.5in]{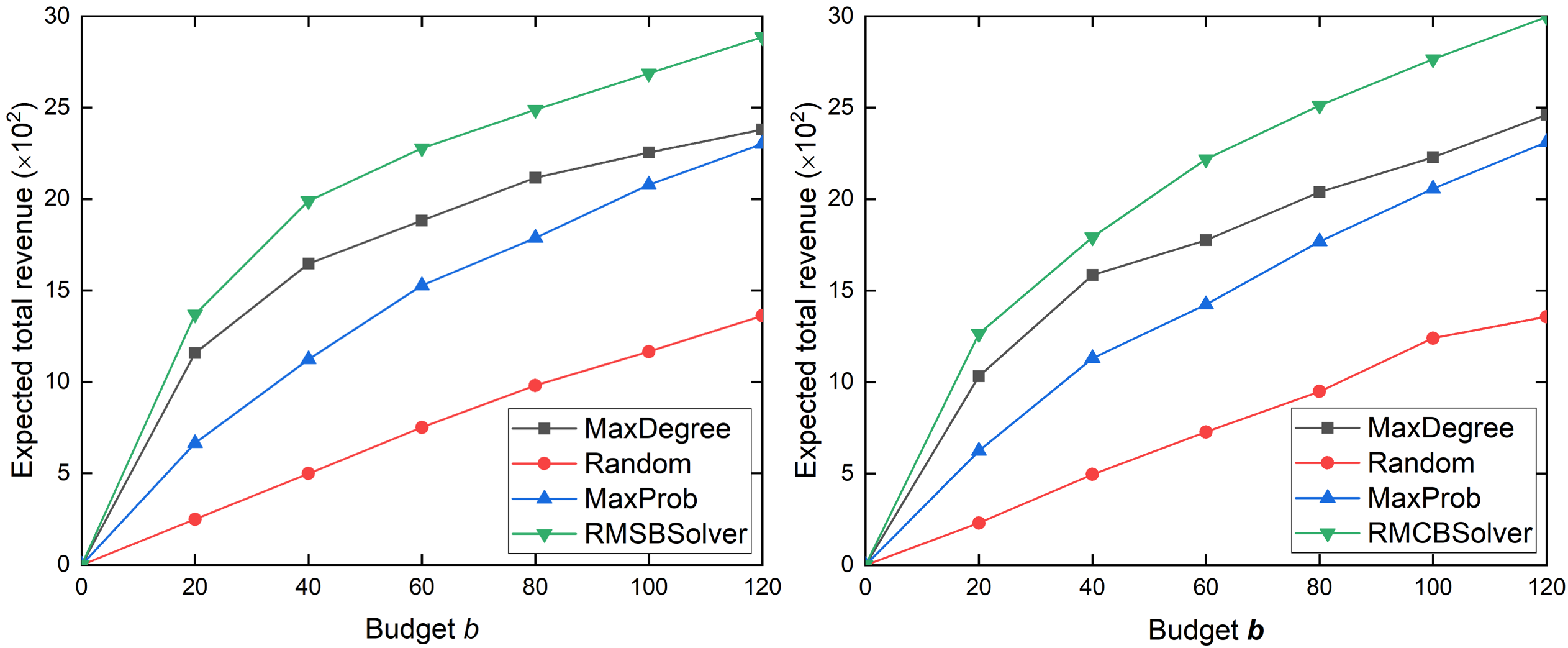}
	\\(b) Dataset-2
	\\${}$
	\includegraphics[width=3.5in]{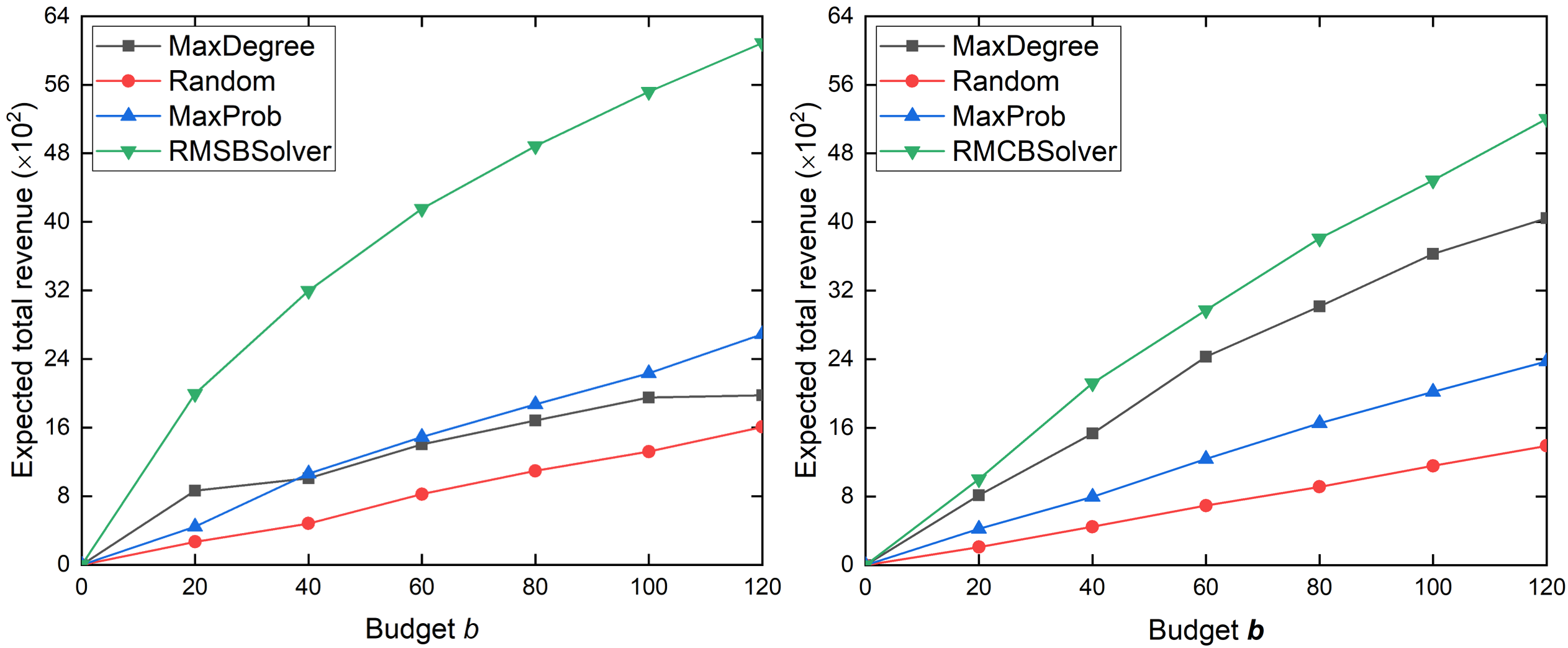}
	\\(c) Dataset-3
	\caption{Submodular performance: the performance comparison of algorithms with different budgets on these three datasets. Parameter settings are $p_e=0.5$ for each $e\in E$ and $k=1$. Left column is to solve RMSB problem, and right column is to solve RMCB.}
	\label{fig2}
\end{figure}

\begin{figure}[!t]
	\centering
	\includegraphics[width=3.5in]{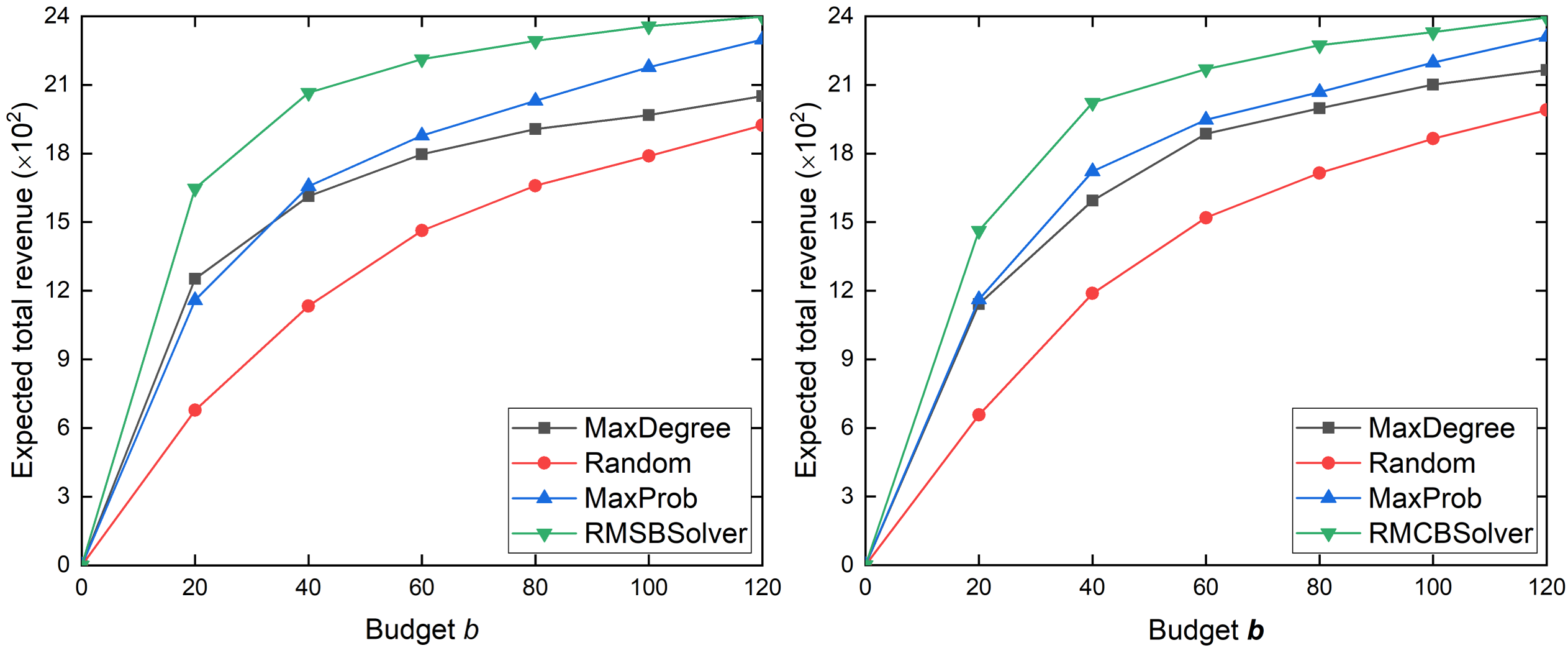}
	\\(a) Dataset-1
	\\${}$
	\includegraphics[width=3.5in]{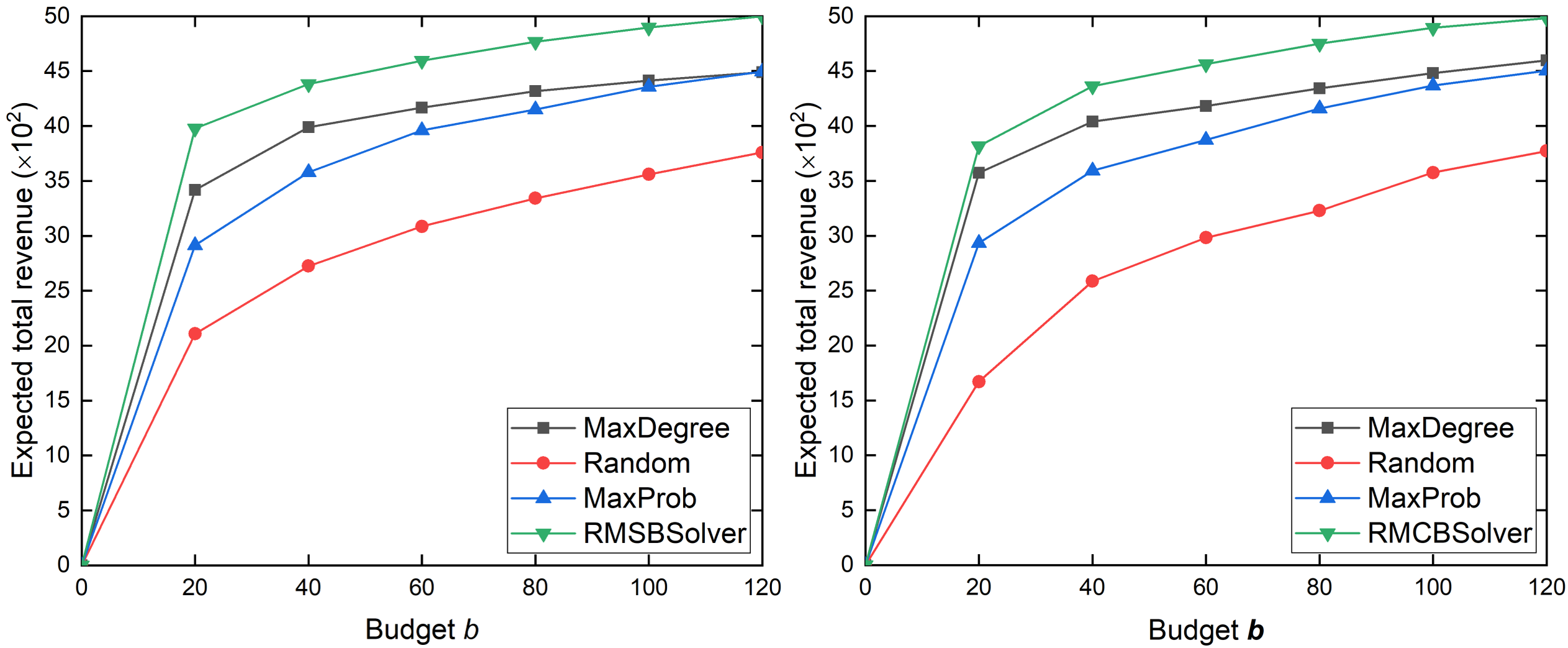}
	\\(b) Dataset-2
	\\${}$
	\includegraphics[width=3.5in]{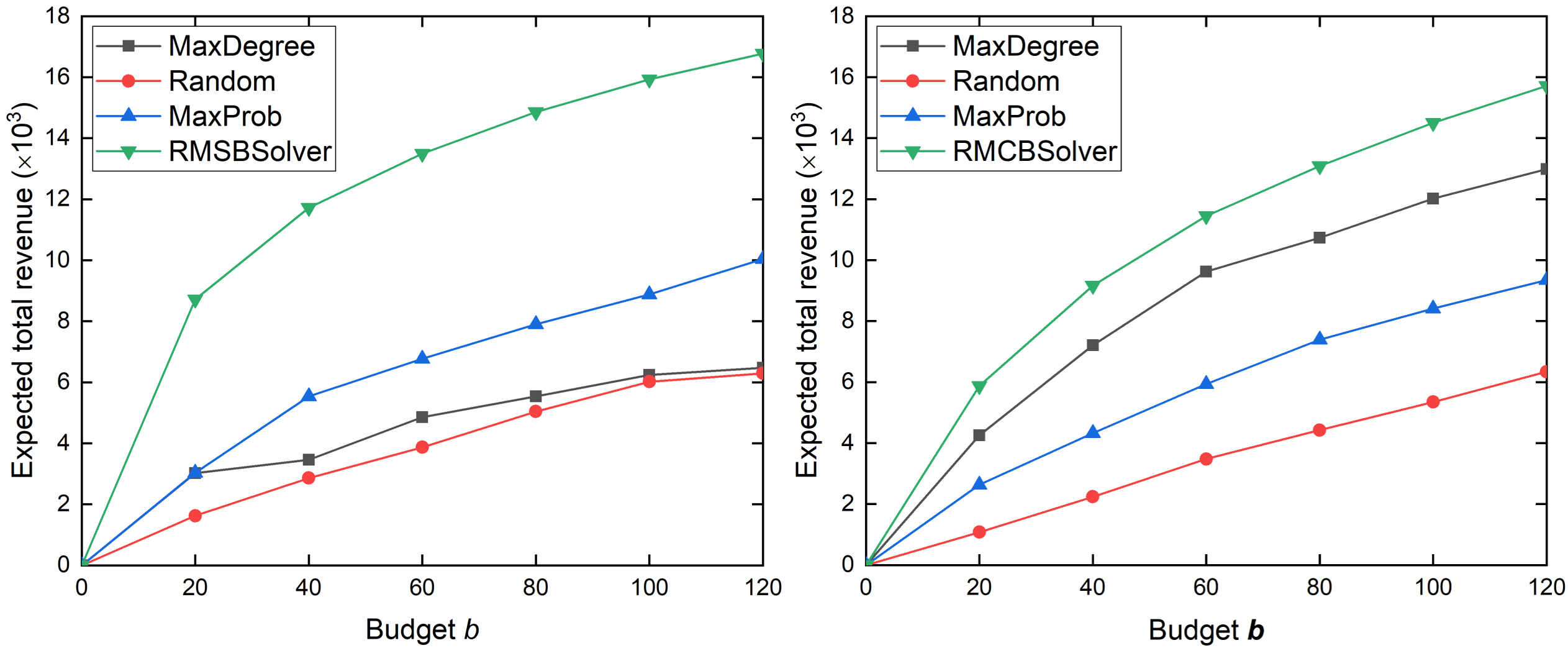}
	\\(c) Dataset-3
	\caption{Submodular performance: the performance comparison of algorithms with different budgets on these three datasets. Parameter settings are $p_e=1$ for each $e\in E$ and $k=2$. Left column is to solve RMSB problem, and right column is to solve RMCB.}
	\label{fig3}
\end{figure}

\begin{figure}[!t]
	\centering
	\includegraphics[width=3.5in]{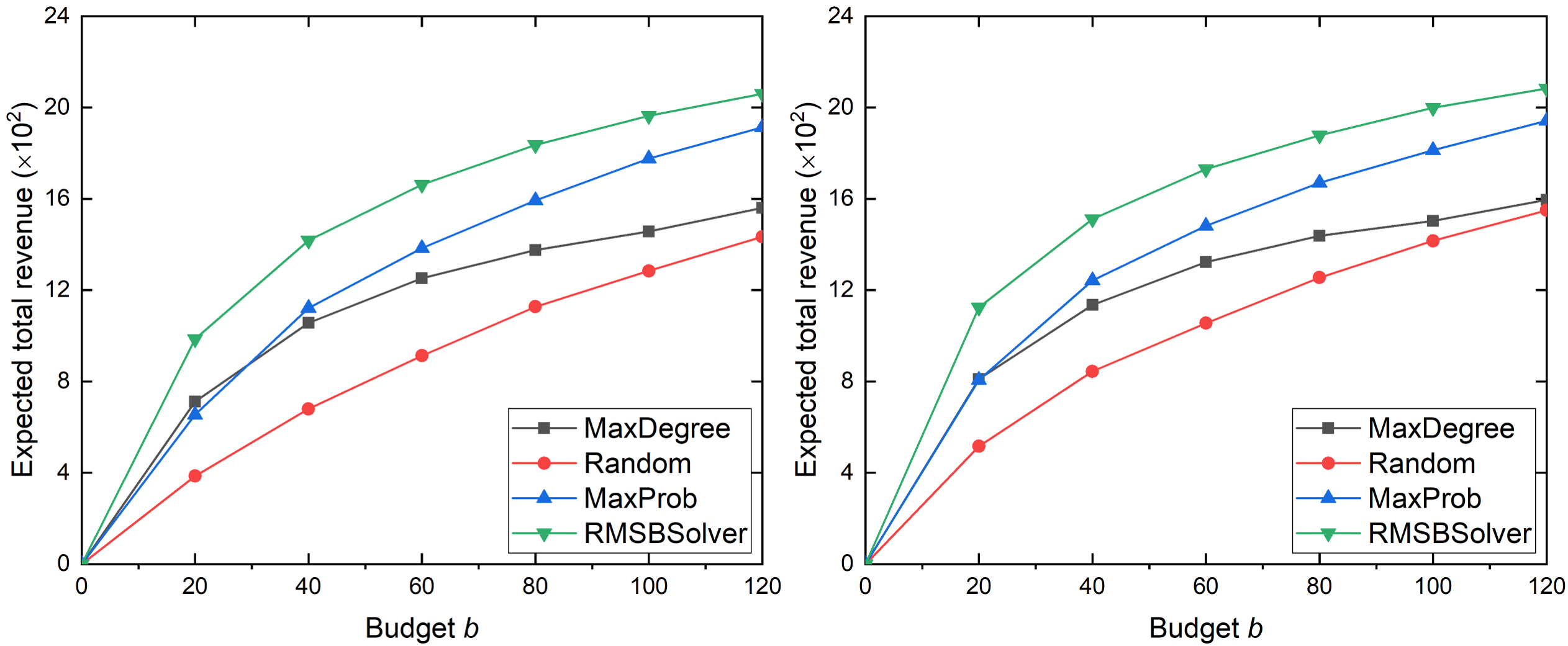}
	\\(a) Dataset-1
	\\${}$
	\includegraphics[width=3.5in]{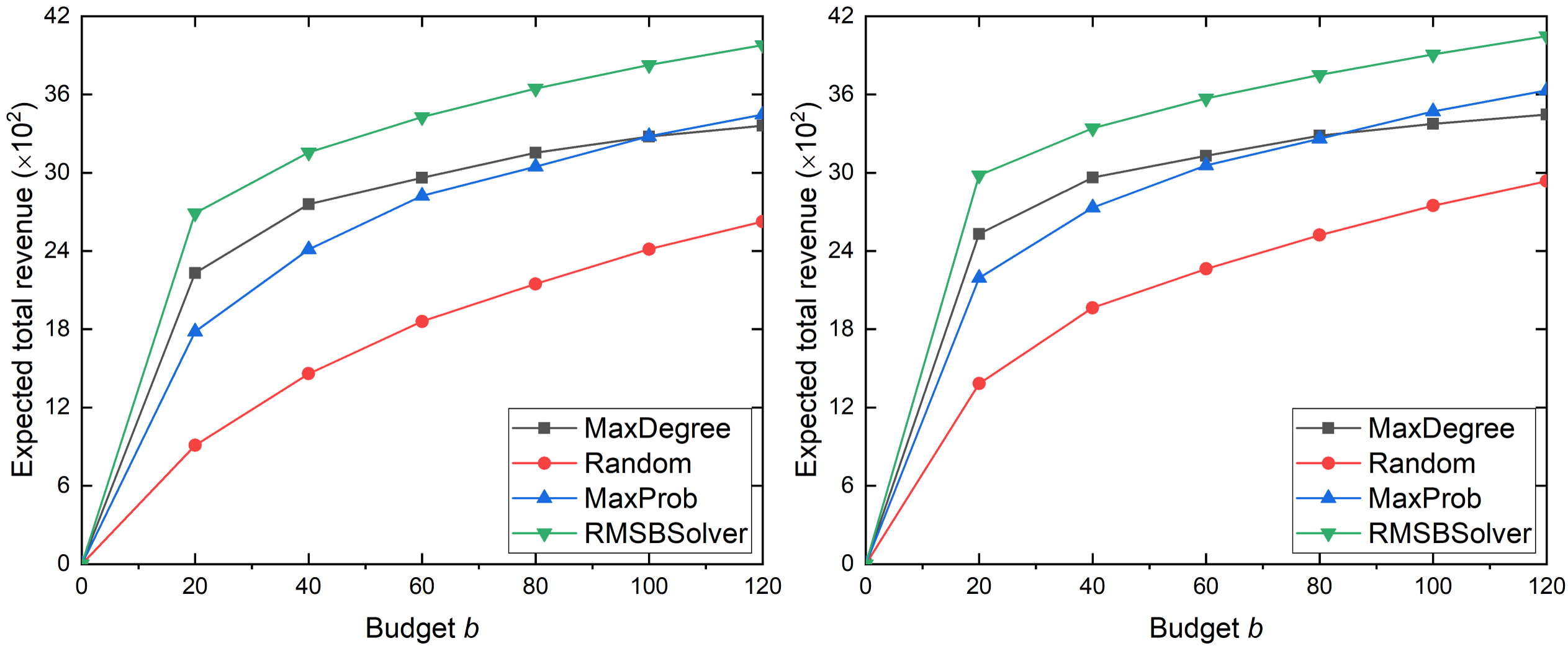}
	\\(b) Dataset-2
	\\${}$
	\includegraphics[width=3.5in]{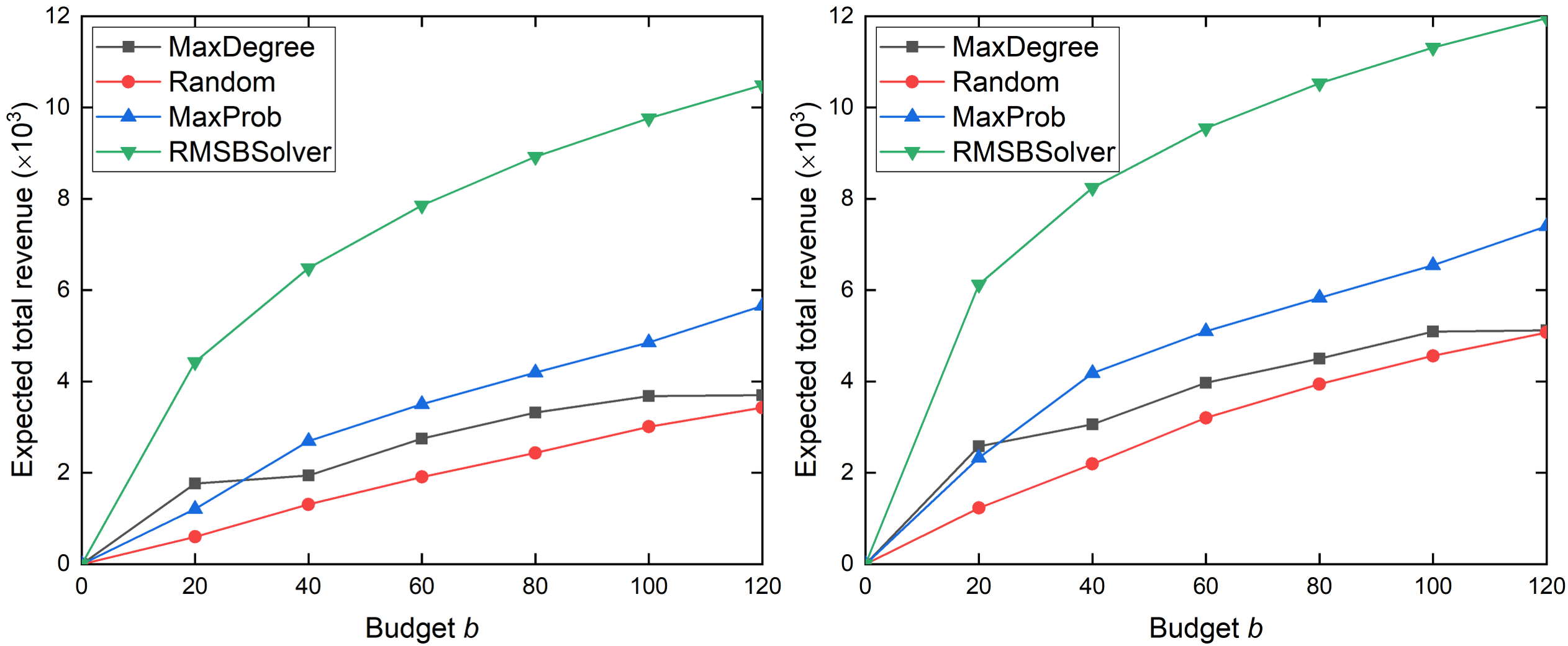}
	\\(c) Dataset-3
	\caption{Non-submodular performance: the performance comparison of algorithms to solve RMSB problem with different budgets on these three datasets. Left column: parameter settings are $p_e=0.5$ for each $e\in E$ and $k=2$. Right column: parameter settings are $p_e=0.5$ for each $e\in E$ and $k=3$.}
	\label{fig4}
\end{figure}

For submodular performance and non-submodular performance, we observe the outcomes of our Algorithm \ref{a1}, Algorithm \ref{a2} and some common used heuristic algorithms, then we compare the results of their performance. It aims to evaluate the effectiveness of the adaptive greedy policy on adaptive (non-)submodular cases under the size and community budget. The Revenue vector $\vec{R}$ can be set as: (1) $(8,6)$ when $k=1$; (2) $(8,6,4)$ when $k=2$; (3) $(8,6,4,2)$ when $k=3$. It means that the revenue of 0-hop participant is 8 units, 1-hop is 6 units, 2-hop is 4 units and 3-hop is 2 units. The baseline algorithms are shown as follows:
\begin{enumerate}
	\item MaxDegree: Invite the user with maximum degree at each step within budget $b(\vec{b})$.
	\item Random: Invite a user randomly from $V$ at each step within budget $b(\vec{b})$.
	\item MaxProb: Invite the user with maximum acceptance probability at each step within budget $b(\vec{b})$.
\end{enumerate}
We use python programming to test each algorithm. The simulation is run on a Windows machine with a 3.40GHz, 4 core Intel CPU and 16GB RAM.

\subsection{Experimental Results}
In our experiments, the whole datasets are considered as the targeted networks. These adaptive algorithms, including our proposed algorithms and other baseline algorithms, are run 50 times on each network, and we take the average of them as the final results. To size budget, it is easy to understand that the number of invitations is limited as $b$. However, for RMCB problem, we need to know how to define this community budget $\vec{b}$. Here, we adopt such a strategy: Suppose the total number of invitations in budget $\vec{b}$ is predefined, the budget for each community can be represented as follows: Given $G=(V,E)$ and budget vector $\vec{b}=(b_1,b_2,...,b_r)$, we have
\begin{equation}
b_r=\left\lfloor\frac{|C_r|\cdot(\text{Total invitations of }\vec{b})}{|V|}\right\rfloor
\end{equation}
It is possible that $\sum_{i=1}^{r}b_i$ is not equal to the total invitations of $\vec{b}$. We need to make some adjustments such that $\sum_{i=1}^{r}b_i$ is equal to the total invitation of $\vec{b}$: add one or mimus one to each community's budget from large community to small community until satisfying equality. Let us look at a specific example:
\begin{exmp}
Considering the right columns of Fig. \ref{fig2} and Fig. \ref{fig3}, "Budget $\vec{b}=20$" means the total invitations of $\vec{b}$ is $20$. Suppose that the community structure is $\mathcal{C}(G)=\{C_1,C_2,C_3,C_3\}$, where $|C_1|=50$, $|C_2|=40$, $|C_3|=30$, and $|C_4|=20$, we have $b_1=\lfloor50\times20/140\rfloor=7$, $b_2=5$, $b_3=4$, and $b_4=2$. Now, $\sum_{i=1}^{4}b_i=18<20$, thus, we let $b_1=7+1=8$, $b_2=5+1=7$ and others maintain so that satisfying $\sum_{i=1}^{4}b_i=20$. This is because $C_1$ and $C_2$ are the largest communities.
\end{exmp}

For submodular performance, the experimental results achieved by different algorithms on these three datasets are shown as Fig. \ref{fig2} and Fig. \ref{fig3}. The left columns of Fig. \ref{fig2} and Fig. \ref{fig3} are under the size budget, and we can see that the expected total revenues returned by RMSBSolver is larger than other policies. The right columns of Fig. \ref{fig2} and Fig. \ref{fig3} are under the community budget, and the performance of RMCBSolver is better than other policies as well. Especially on Dataset-3, the gap between RMSBSolver and other baseline algorithms is very significant, which may be related to the size of the dataset and the network structure. The performance of baseline algorithm, such as MaxDegree, is on and off, good for Dataset-2 and bad for Dataset-3. Thus, we cannot predict whether it is good or bad in advance. An interesting discovery is that, intuitively, the total revenue of RMCB should be smaller than that of RMSB, because the constraint of RMCB is more strict and the approximation performance of RMSBSolver is better than that of RMCBSolver. But, from the results, their expected total revenues are very close regardless of adaptive greedy strategy or other heuristic policies, even sometimes the revenues obtained under the community budget are better. For example, on Dataset-3, the performance of MaxDegree algorithm under the community budget is apparently better than that under the size budget. Therefore, considering the distribution of influence and the average revenue comprehensively, community budget is a more sensible choice to us.

For non-submodular performance, the experimental results achieved by different algorithms on these three datasets are shown as Fig. \ref{fig4}. We can see that the expected total revenues returned by RMSBSolver is larger than other policies. Same as before, on Dataset-3, the gap between RMSBSolver and other baseline algorithms is very significant. Even though the objective function of RMSB is not adaptive submodular, from the actual performance, this effect does not seem obvious. In other words, the objective function of RMSB is close to adaptive submodular. According to Lemma 7, the bound $\delta$ of these three datasets are shown as follows:

\begin{table}[h]
	\renewcommand{\arraystretch}{1.3}
	\caption{The bound $\delta$ of three datasets}
	\label{table_2}
	\centering
	\begin{tabular}{|c|c|c|c|}
		\hline
		\bfseries & Dataset-1 & Dataset-2 & Dataset-3\\
		\hline
		$k=2$ & 106 & 310 & 247 \\
		\hline
		$k=3$ & 199 & 1060 & 820 \\
		\hline
	\end{tabular}
\end{table}

\noindent
Unfortunately, this results is frustrating. Even if the bound given by Lemma 7 is much smaller than that given by Lemma 5, it is still too large to get a satisfactory approximation ratio. Thus, we need to look for a more compact estimation method of bound $\delta$ further so as to get a practical and meaningful approximation ratio.

\section{Conclusion}
In this paper, we propose RMSB problem based on Collaborate Game model. In order to consider the distribution of influence and total revenue, we extend RMSB to RMCB problem by use of community budget. We proposed RMSBSolver and RMCBSolver to address both of them according to the adaptive greedy policy. The objective function of RMSB and RMCB is adaptive monotone and not adaptive submodular, but adaptive submodular in some special cases. Then, we reduce the community budget of RMCB to partition matroid, which can be solved within $(1/2)$-approximation under the special submodular cases. For RMSB problem under the general non-submodular cases, we give a data-dependent $(1-e^{-\frac{1}{\delta}})$-approximation through bounding the adaptive total primal curvature by $\delta$. The good performance of our algorithms is verified by our experiments on three real network datasets. The performance under the community budget is not worse than that under the size budget, thus, community budget is a better choice. However, the bound $\delta$ is not satisfactory, in future, we need to improve it further such that getting a smaller one.


%

\ifCLASSOPTIONcompsoc
  \section*{Acknowledgments}
\else
  \section*{Acknowledgment}
\fi

This work is partly supported by National Science Foundation under grant 1747818.

\ifCLASSOPTIONcaptionsoff
  \newpage
\fi



%

\bibliographystyle{IEEEtran}
\bibliography{references}

\begin{thebibliography}{10}
\providecommand{\url}[1]{#1}
\csname url@samestyle\endcsname
\providecommand{\newblock}{\relax}
\providecommand{\bibinfo}[2]{#2}
\providecommand{\BIBentrySTDinterwordspacing}{\spaceskip=0pt\relax}
\providecommand{\BIBentryALTinterwordstretchfactor}{4}
\providecommand{\BIBentryALTinterwordspacing}{\spaceskip=\fontdimen2\font plus
\BIBentryALTinterwordstretchfactor\fontdimen3\font minus
  \fontdimen4\font\relax}
\providecommand{\BIBforeignlanguage}[2]{{%
\expandafter\ifx\csname l@#1\endcsname\relax
\typeout{** WARNING: IEEEtran.bst: No hyphenation pattern has been}%
\typeout{** loaded for the language `#1'. Using the pattern for}%
\typeout{** the default language instead.}%
\else
\language=\csname l@#1\endcsname
\fi
#2}}
\providecommand{\BIBdecl}{\relax}
\BIBdecl

\bibitem{guo2019khop}
J.~Guo and W.~Wu, ``A k-hop collaborate game model: Adaptive strategy to
  maximize total revenue,'' \emph{arXiv preprint arXiv:1910.04125}, 2019.

\bibitem{domingos2001mining}
P.~Domingos and M.~Richardson, ``Mining the network value of customers,'' in
  \emph{Proceedings of the seventh ACM SIGKDD international conference on
  Knowledge discovery and data mining}.\hskip 1em plus 0.5em minus 0.4em\relax
  ACM, 2001, pp. 57--66.

\bibitem{richardson2002mining}
M.~Richardson and P.~Domingos, ``Mining knowledge-sharing sites for viral
  marketing,'' in \emph{Proceedings of the eighth ACM SIGKDD international
  conference on Knowledge discovery and data mining}.\hskip 1em plus 0.5em
  minus 0.4em\relax ACM, 2002, pp. 61--70.

\bibitem{kempe2003maximizing}
D.~Kempe, J.~Kleinberg, and {\'E}.~Tardos, ``Maximizing the spread of influence
  through a social network,'' in \emph{Proceedings of the ninth ACM SIGKDD
  international conference on Knowledge discovery and data mining}.\hskip 1em
  plus 0.5em minus 0.4em\relax ACM, 2003, pp. 137--146.

\bibitem{nemhauser1978analysis}
G.~L. Nemhauser, L.~A. Wolsey, and M.~L. Fisher, ``An analysis of
  approximations for maximizing submodular set functions—i,''
  \emph{Mathematical programming}, vol.~14, no.~1, pp. 265--294, 1978.

\bibitem{arthur2009pricing}
D.~Arthur, R.~Motwani, A.~Sharma, and Y.~Xu, ``Pricing strategies for viral
  marketing on social networks,'' in \emph{International workshop on internet
  and network economics}.\hskip 1em plus 0.5em minus 0.4em\relax Springer,
  2009, pp. 101--112.

\bibitem{lu2012profit}
W.~Lu and L.~V. Lakshmanan, ``Profit maximization over social networks,'' in
  \emph{2012 IEEE 12th International Conference on Data Mining}.\hskip 1em plus
  0.5em minus 0.4em\relax IEEE, 2012, pp. 479--488.

\bibitem{tang2016profit}
J.~Tang, X.~Tang, and J.~Yuan, ``Profit maximization for viral marketing in
  online social networks,'' in \emph{2016 IEEE 24th International Conference on
  Network Protocols (ICNP)}.\hskip 1em plus 0.5em minus 0.4em\relax IEEE, 2016,
  pp. 1--10.

\bibitem{zhang2016profit}
H.~Zhang, H.~Zhang, A.~Kuhnle, and M.~T. Thai, ``Profit maximization for
  multiple products in online social networks,'' in \emph{IEEE INFOCOM 2016-The
  35th Annual IEEE International Conference on Computer Communications}.\hskip
  1em plus 0.5em minus 0.4em\relax IEEE, 2016, pp. 1--9.

\bibitem{guo2019multi}
J.~Guo, T.~Chen, and W.~Wu, ``A multi-feature diffusion model: Rumor blocking
  in social networks,'' \emph{arXiv preprint arXiv:1912.03481}, 2019.

\bibitem{8952599}
------, ``Budgeted coupon advertisement problem: Algorithm and robust
  analysis,'' \emph{IEEE Transactions on Network Science and Engineering}, pp.
  1--1, 2020.

\bibitem{zhou2015bilevel}
F.~Zhou, R.~J. Jiao, and B.~Lei, ``Bilevel game-theoretic optimization for
  product adoption maximization incorporating social network effects,''
  \emph{IEEE Transactions on Systems, Man, and Cybernetics: Systems}, vol.~46,
  no.~8, pp. 1047--1060, 2015.

\bibitem{lu2016pricing}
Z.~Lu, H.~Zhou, V.~O. Li, and Y.~Long, ``Pricing game of celebrities in
  sponsored viral marketing in online social networks with a greedy advertising
  platform,'' in \emph{2016 IEEE International Conference on Communications
  (ICC)}.\hskip 1em plus 0.5em minus 0.4em\relax IEEE, 2016, pp. 1--6.

\bibitem{yang2016continuous}
Y.~Yang, X.~Mao, J.~Pei, and X.~He, ``Continuous influence maximization: What
  discounts should we offer to social network users?'' in \emph{Proceedings of
  the 2016 international conference on management of data}.\hskip 1em plus
  0.5em minus 0.4em\relax ACM, 2016, pp. 727--741.

\bibitem{ajorlou2016dynamic}
A.~Ajorlou, A.~Jadbabaie, and A.~Kakhbod, ``Dynamic pricing in social networks:
  The word-of-mouth effect,'' \emph{Management Science}, vol.~64, no.~2, pp.
  971--979, 2016.

\bibitem{smith2018approximately}
J.~D. Smith, A.~Kuhnle, and M.~T. Thai, ``An approximately optimal bot for
  non-submodular social reconnaissance,'' in \emph{Proceedings of the 29th on
  Hypertext and Social Media}.\hskip 1em plus 0.5em minus 0.4em\relax ACM,
  2018, pp. 192--200.

\bibitem{buchbinder2015tight}
N.~Buchbinder, M.~Feldman, J.~Seffi, and R.~Schwartz, ``A tight linear time
  (1/2)-approximation for unconstrained submodular maximization,'' \emph{SIAM
  Journal on Computing}, vol.~44, no.~5, pp. 1384--1402, 2015.

\bibitem{liu2018profit}
B.~Liu, X.~Li, H.~Wang, Q.~Fang, J.~Dong, and W.~Wu, ``Profit maximization
  problem with coupons in social networks,'' in \emph{International Conference
  on Algorithmic Applications in Management}.\hskip 1em plus 0.5em minus
  0.4em\relax Springer, 2018, pp. 49--61.

\bibitem{feige2011maximizing}
U.~Feige, V.~S. Mirrokni, and J.~Vondr{\'a}k, ``Maximizing non-monotone
  submodular functions,'' \emph{SIAM Journal on Computing}, vol.~40, no.~4, pp.
  1133--1153, 2011.

\bibitem{tong2018coupon}
G.~Tong, W.~Wu, and D.-Z. Du, ``Coupon advertising in online social systems:
  Algorithms and sampling techniques,'' \emph{arXiv preprint arXiv:1802.06946},
  2018.

\bibitem{guo2019novel}
J.~Guo and W.~Wu, ``A novel scene of viral marketing for complementary
  products,'' \emph{IEEE Transactions on Computational Social Systems}, vol.~6,
  no.~4, pp. 797--808, 2019.

\bibitem{lu2015competition}
W.~Lu, W.~Chen, and L.~V. Lakshmanan, ``From competition to complementarity:
  comparative influence diffusion and maximization,'' \emph{Proceedings of the
  VLDB Endowment}, vol.~9, no.~2, pp. 60--71, 2015.

\bibitem{golovin2011adaptive}
D.~Golovin and A.~Krause, ``Adaptive submodularity: Theory and applications in
  active learning and stochastic optimization,'' \emph{Journal of Artificial
  Intelligence Research}, vol.~42, pp. 427--486, 2011.

\bibitem{tong2019adaptive}
G.~Tong, ``Adaptive influence maximization under general feedback models,''
  \emph{arXiv preprint arXiv:1902.00192}, 2019.

\bibitem{gotovos2015non}
A.~Gotovos, A.~Karbasi, and A.~Krause, ``Non-monotone adaptive submodular
  maximization,'' in \emph{Twenty-Fourth International Joint Conference on
  Artificial Intelligence}, 2015.

\bibitem{gabillon2013adaptive}
V.~Gabillon, B.~Kveton, Z.~Wen, B.~Eriksson, and S.~Muthukrishnan, ``Adaptive
  submodular maximization in bandit setting,'' in \emph{Advances in Neural
  Information Processing Systems}, 2013, pp. 2697--2705.

\bibitem{fern2017adaptive}
A.~Fern, R.~Goetschalckx, M.~Hamidi-Haines, and P.~Tadepalli, ``Adaptive
  submodularity with varying query sets: An application to active multi-label
  learning,'' in \emph{International Conference on Algorithmic Learning
  Theory}, 2017, pp. 577--592.

\bibitem{yuan2017adaptive}
J.~Yuan and S.-J. Tang, ``Adaptive discount allocation in social networks,'' in
  \emph{Proceedings of the 18th ACM International Symposium on Mobile Ad Hoc
  Networking and Computing}.\hskip 1em plus 0.5em minus 0.4em\relax ACM, 2017,
  p.~22.

\bibitem{han2018efficient}
K.~Han, K.~Huang, X.~Xiao, J.~Tang, A.~Sun, and X.~Tang, ``Efficient algorithms
  for adaptive influence maximization,'' \emph{Proceedings of the VLDB
  Endowment}, vol.~11, no.~9, pp. 1029--1040, 2018.

\bibitem{golovin2011matroid}
D.~Golovin and A.~Krause, ``Adaptive submodular optimization under matroid
  constraints,'' \emph{arXiv preprint arXiv:1101.4450}, 2011.

\bibitem{wang2016approximation}
Z.~Wang, B.~Moran, X.~Wang, and Q.~Pan, ``Approximation for maximizing monotone
  non-decreasing set functions with a greedy method,'' \emph{Journal of
  Combinatorial Optimization}, vol.~31, no.~1, pp. 29--43, 2016.

\bibitem{nr}
\BIBentryALTinterwordspacing
R.~A. Rossi and N.~K. Ahmed, ``The network data repository with interactive
  graph analytics and visualization,'' in \emph{AAAI}, 2015. [Online].
  Available: \url{http://networkrepository.com}
\BIBentrySTDinterwordspacing

\end{thebibliography}

%




\end{document}